\providecommand{\tabularnewline}{\\}
\theoremstyle{plain}
\newtheorem{thm}{\protect\theoremname}
  \theoremstyle{definition}
  \newtheorem{defn}[thm]{\protect\definitionname}
  \theoremstyle{plain}
  \newtheorem{cor}[thm]{\protect\corollaryname}
  \theoremstyle{plain}
  \newtheorem{lem}[thm]{\protect\lemmaname}
  \theoremstyle{plain}
  \newtheorem{prop}[thm]{\protect\propositionname}
  \theoremstyle{remark}
  \newtheorem{rem}[thm]{\protect\remarkname}
  \providecommand{\corollaryname}{Corollary}
  \providecommand{\definitionname}{Definition}
  \providecommand{\lemmaname}{Lemma}
  \providecommand{\propositionname}{Proposition}
  \providecommand{\remarkname}{Remark}
\providecommand{\theoremname}{Theorem}
\begin{document}

\bibliographystyle{unsrt}

\title{Continuous-variable entanglement distillation over a pure loss channel
\\ with multiple quantum scissors}

\author{Kaushik P. Seshadreesan$^{1}$, Hari Krovi$^{2}$, and Saikat Guha$^{1}$}

\affiliation{$^{1}$College of Optical Sciences, University of Arizona, Tucson,
AZ 85721, USA\\$^{2}$Quantum Engineering and Computing Physical Sciences
and Systems, Raytheon BBN Technologies, Cambridge, MA 02138, USA}

\date{\today}

\begin{abstract}
Entanglement distillation is a key primitive for distributing high quality entanglement between remote locations. Probabilistic noiseless linear amplification based on the quantum scissors is a candidate for entanglement distillation from noisy continuous-variable (CV) entangled states. Being a non-Gaussian operation, the quantum scissors is challenging to analyze. We present a derivation of the non-Gaussian state heralded by multiple quantum scissors in a pure loss channel with two-mode squeezed vacuum input. We choose the reverse coherent information (RCI)---a proven lower bound on the distillable entanglement of a quantum state under one-way local operations and classical communication (LOCC), as our figure of merit. We evaluate a Gaussian lower bound on the RCI of the heralded state. We show that it can exceed the unlimited two-way LOCC-assisted direct transmission entanglement distillation capacity of the pure loss channel. The optimal heralded Gaussian RCI with two quantum scissors is found to be significantly more than that with a single quantum scissors, albeit at the cost of decreased success probability. Our results fortify the possibility of a quantum repeater scheme for CV quantum states using the quantum scissors.
\end{abstract}

\maketitle

\section{Introduction}
Entanglement shared across large distances is a key resource for quantum information processing tasks such as distributed quantum computation~\cite{RvM16,DDKP07,BR03}, distributed sensing~\cite{PKD18,GJEGF18,ZZS18}, quantum communication
protocols such as quantum key distribution~\cite{E91}, quantum teleportation~\cite{BBCJPW93}, superdense coding~\cite{BW92}, entanglement assisted high rate quantum error correcting codes~\cite{GJG18}, and entanglement-assisted classical communication over noisy channels~\cite{BSST99}. 

Optical photons are arguably the best carriers of quantum information to distribute entanglement between remote locations~\cite{Kimble2008}. Optical entanglement distribution is largely classified into schemes based on discrete and continuous variables (DV and CV) depending on the entangled resource state that is transmitted~\cite{PCLWZ12,HHHH09}. In DV, the resource states are maximally entangled states of discrete, finite-dimensional degrees of freedom of single photons, such as the polarization; while in CV, they are entangled multimode squeezed states of the continuous, infinite-dimensional, quadrature degrees of freedom of electromagnetic field modes, which are Gaussian states, i.e., states completely described by the first two moments of the quadrature operators associated with the modes.

The primary challenge in distributing optical entanglement is photon loss and noise in transmission, which degrades the quality of entanglement. Thus, entanglement distillation---the process of distilling from several copies of a noisy entangled state fewer copies of a more entangled state using local operations and classical communication (LOCC), is key in distributing high quality entanglement. Entanglement distillation in DV, e.g., from weakly entangled mixed states of pairs of single-photon polarization qubits, typically involves the quantum CNOT gate~\cite{BBPSSW96, DEJMPS96}, or more practically, a simple polarizing beam splitter~\cite{PSBZ01}, acting locally on pairs of identical copies of the state, followed by measurement and classical communication between the two parties. When applied recursively, they yield highly entangled and highly pure two-qubit states~\cite{DB07}. In CV, it is known that entanglement distillation from noisy Gaussian entangled states cannot be effected by Gaussian operations alone~\cite{ESP02, NGGL14}, where the latter refer to state transformations based on Hamiltonians that are at most quadratic in the quadrature operators and map Gaussian states to other Gaussian states. Non-Gaussian operations such as photon counting are required. In this regard, schemes based on Fock state filtering~\cite{SRZ06}, such as quantum catalysis~\cite{UFPKRL15, LM02}, photon subtraction~\cite{TNTTHFS10, DZNLFPW12}, symmetric photon replacement~\cite{BESP03,EBSP04,LR09} and purifying distillation~\cite{Furasek10}, which degaussify Gaussian states, have been considered. When applied recursively, these schemes regaussify the final output while yielding highly entangled~\cite{BESP03, DZNLFPW12}, and in some cases also highly pure Gaussian states~\cite{LR09, Furasek10}.

In Ref.~\cite{RL09}, Ralph and Lund proposed the concept of probabilistic noiseless linear amplification (NLA), which can be realized using the non-Gaussian operation---the quantum scissors~\cite{PPB98}, as a candidate for CV entanglement distillation. In the limit of a large number of scissors, NLA can probabilistically distill highly entangled and highly pure Gaussian states from weakly entangled, mixed Gaussian states~\cite{RL09}. Approximate NLA based on a single quantum scissors was further investigated in~\cite{DR2017,DR2018} for quantum error correction~\cite{Ralph2011} of CV entangled states, towards designing quantum repeaters~\cite{MATN15,SSdRG11,BDCZ98}. CV entanglement distillation with a single quantum scissors was experimentally demonstrated in~\cite{XRLWP10}.

In this paper, we analyze entanglement distillation with approximate NLA effected by multiple, but finite number of quantum scissors over a pure loss channel with two-mode squeezed vacuum state (TMSV) input. We show that the entanglement content of the heralded output of the NLA---quantified by the {\em reverse coherent information} (RCI)~\cite{GPLS09,PGBL09,DJKR06,Hcube00}---is higher than 
	\begin{equation}
	C_{\rm direct}(\eta) = -\log(1-\eta),
	\label{plob}
	\end{equation}
the recently proven maximum entanglement generation rate in ebits (maximally entangled qubits) per mode achievable through a pure-loss channel of transmissivity $\eta$~\cite{PLOB17,PGBL09} (see also~\cite{WTB17} for a strong converse theorem.) 
The RCI is an information-theoretic lower bound on the distillable entanglement per copy of a shared state that is achievable using one-way LOCC when many copies of the shared state are available~\cite{PLOB17,DW05}. Although NLA has been proposed in the past for CV entanglement distillation, this is the first time the optimal trade-off between the RCI of the heralded state and the probability of success of the NLA as a function of the scissor-based NLA's internal parameters, a gain (transmissivity of a beamsplitter) and the number of scissors is quantified. Clearly, RCI times the NLA success probability must be less than $C_{\rm direct}(\eta)$~\cite{PLOB17}. But, being able to herald entanglement over a lossy channel, even if probabilistically, of RCI higher than $C_{\rm direct}(\eta)$, i.e., heralding a state of distillable entanglement higher than $C_{\rm direct}(\eta)$, is significant towards realizing CV quantum repeaters since the latter is a pre-requisite to building a second-generation quantum repeater that can outperform $C_{\rm direct}(\eta)$~\cite{SKG18}.

More technically, the contributions of this paper include a calculation of the non-Gaussian state heralded by the approximate NLA based on multiple but finite number of quantum scissors and the corresponding heralding success probability, based on characteristic functions and the Husimi Q function. The calculation also applies to the teleportation-based CV error correction scheme with approximate NLA~\cite{Ralph2011}, and is computationally efficient. This is in contrast to the Fock basis calculations presented in~\cite{DR2017,DR2018}, which do not scale well with increasing number of quantum scissors. Our choice of the RCI as the figure of merit is operationally more relevant than those considered before, such as the logarithmic negativity~\cite{VW02, Plenio05} and the entanglement of formation (EOF)~\cite{BDSW96}. We numerically evaluate a Gaussian lower bound on the RCI of the state heralded by approximate NLA. We show that there exist TMSV input mean photon numbers and scissors gain parameter values where the lower bound exceeds $C_{\rm direct}(\eta)$. Further, we find that the Gaussian RCI heralded by two quantum scissors is significantly higher compared to that heralded by a single quantum scissors, and in some cases the addition of the second quantum scissors even helps boost it above $C_{\rm direct}(\eta)$, which could not be achieved with just a single quantum scissors. In addition to the analyses based on the RCI, we also evaluate a Gaussian lower bound on the EOF of the state heralded by a single quantum scissors. We qualitatively validate the findings presented in~\cite{DR2018} and extend the analysis to two quantum scissors.

The paper is organized as follows. In Sec.~\ref{sec:Noiseless-Linear-Amplification},
we review the basic concept of NLA with quantum scissors, outlining its relevance to CV entanglement distillation. In Sec.~\ref{sec:Entanglement-distillation-over}, we describe in detail the methods we employ to analyze CV entanglement distillation using multiple, but finite number of quantum scissors, including our figure of merit the RCI. Section~\ref{sec:Results} contains the results. Finally, we summarize our findings in Sec.~\ref{sec:Conclusions}.%

\section{Noiseless Linear Amplification with Quantum Scissors\label{sec:Noiseless-Linear-Amplification}}

Noiseless linear amplification~\cite{RL09} refers to probabilistic amplification that, e.g., transforms coherent states as $\left|\alpha\right\rangle \rightarrow\left|g\alpha\right\rangle$
where $g\in\mathbb{R}$ is the gain. NLA can be implemented in a heralded fashion using linear optics,
photon injection and detection~\cite{RL09}. The scheme involves splitting
the input signal into $N$ parts of equal intensity and recombining
them following the quantum scissors operation on each part
as shown in Fig.~\ref{LOPD NLA} (a). 

\begin{figure}[H]
	\centering
	
	\begin{tabular}{c}
		\includegraphics[scale=0.475]{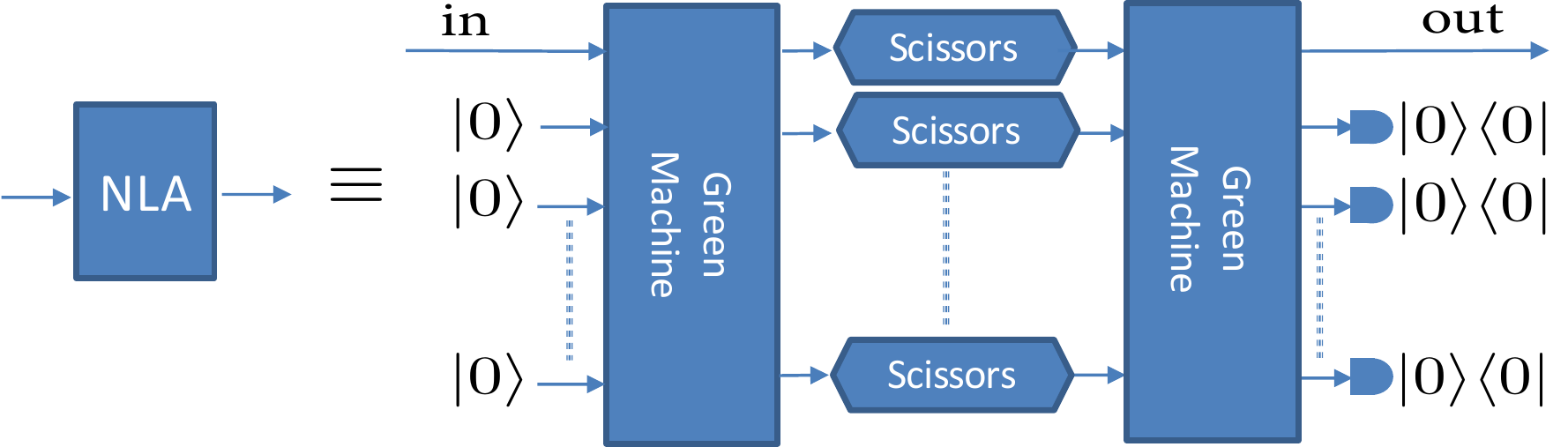}\tabularnewline
	\end{tabular}
	
	(a)
	
	\medskip{}
	
	\begin{tabular}{c}
		\includegraphics[scale=0.55]{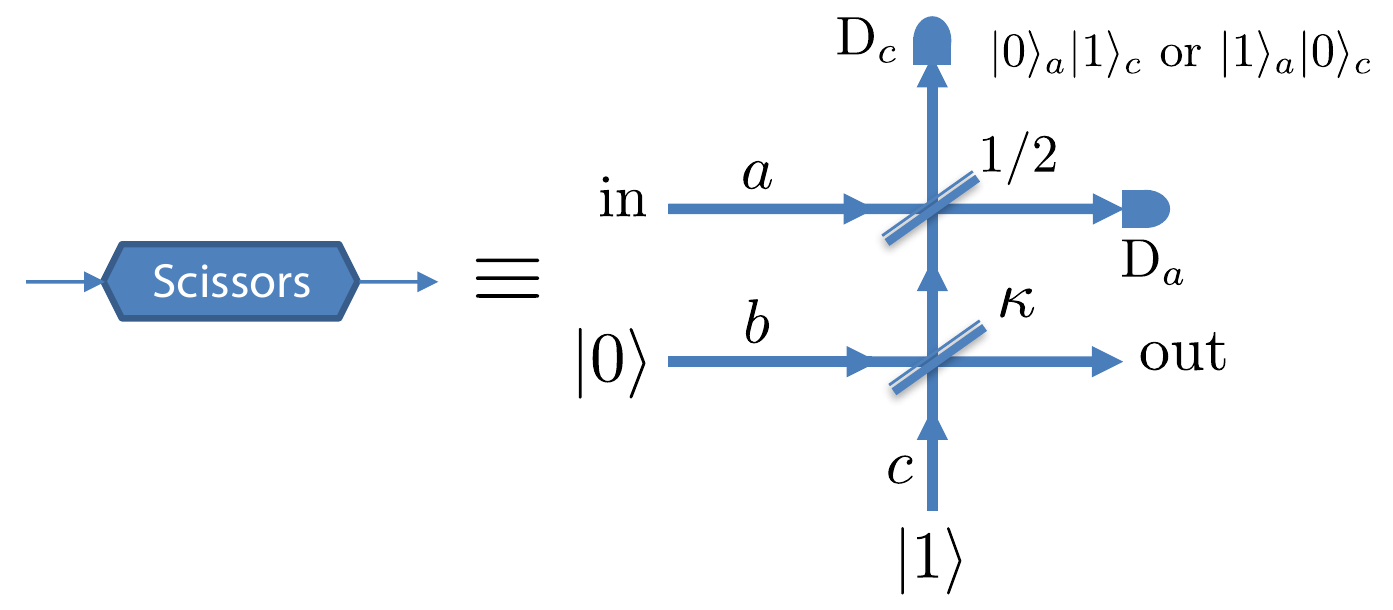}\tabularnewline
	\end{tabular}
	
	(b)
	
	\caption{(a) Noiseless linear amplification (NLA) implemented using linear
		optics and (b) the quantum scissors operation based on photon
		injection and detection. The Green Machines in (a) refer to an $n-$mode input $n-$mode device that performs a $n-$way splitting or its inverse (recombining) unitary operation. When all but one of the input modes are in the vacuum state, the Green machine splits the mean photon number in the first input mode uniformly across all the output modes.}
	\label{LOPD NLA}
\end{figure}
Quantum scissors, as the name suggests, refers to an operation that
truncates a quantum state in Fock space~\cite{PPB98}. In addition to truncation, it can be used to amplify
certain Fock state components of the state relative to others~\cite{RL09}.
Consider the scheme shown in Fig.~\ref{LOPD NLA} (b) comprising of
single photon injection and detection. A single photon (in mode $c$)
is mixed with vacuum (in mode $b$) on a beam splitter of transmissivity
$\kappa=1/\left(1+g^{2}\right)$ ($g$ being the intended gain of
NLA), creating an entangled state in the $\left\{ \left|0\right\rangle _{b}\otimes\left|1\right\rangle _{c},\left|1\right\rangle _{c}\otimes\left|0\right\rangle _{b}\right\} $
subspace. When the signal in mode $a$ is mixed with mode $c$ on
a 50:50 beam splitter and either one of the two projections $\left\{ \left|0\right\rangle _{a}\otimes\left|1\right\rangle _{c},\left|1\right\rangle _{a}\otimes\left|0\right\rangle _{c}\right\} $
is applied (i.e., when detector $D_{a}$ clicks and $D_{c}$ doesn't,
or vice versa), the $\left\{ \left|0\right\rangle ,\left|1\right\rangle \right\} $
support of the quantum state of the signal is teleported to mode $b.$
Further, the teleported state is such that its $\left|1\right\rangle $
component is amplified relative to the vacuum component depending
on the choice of $\kappa$. In summary, the quantum scissors
scheme of Fig.~\ref{LOPD NLA} on any input signal state $\left|\psi\right\rangle \propto\alpha_{0}\left|0\right\rangle +\alpha_{1}\left|1\right\rangle +\cdots$
heralds the truncated and amplified output
\begin{equation}
\Gamma\left(g\right)\left(\alpha_{0}\left|0\right\rangle +\alpha_{1}\left|1\right\rangle +\cdots\right)=\sqrt{\frac{{1}}{1+g^{2}}}\left(\alpha_{0}\left|0\right\rangle +g\alpha_{1}\left|1\right\rangle \right).\label{eq: scissor op}
\end{equation}

\begin{figure}[H]
	\centering
	
	\begin{tabular}{c}
		\includegraphics[scale=0.65]{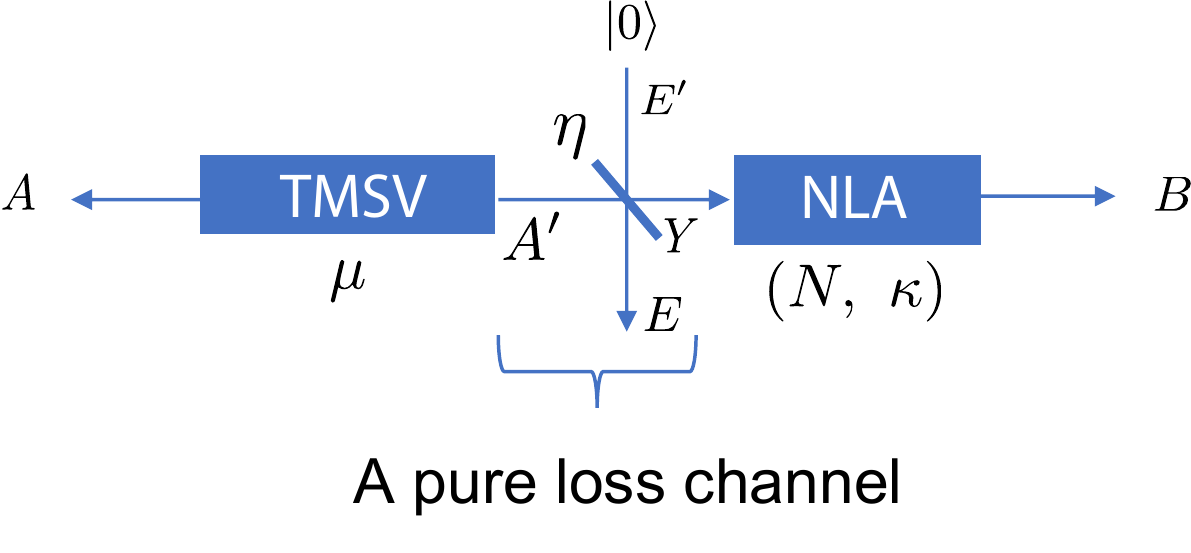}\tabularnewline
	\end{tabular}
	
	\caption{A pure loss channel of transmissivity $\eta$ with TMSV state input
		appended by $N-$quantum scissors of gain $g=\sqrt{\left(1-\kappa\right)/\kappa}$
		at the output.}
	\label{half channel}
\end{figure}

When the signal to the modified quantum scissors is sufficiently weak
such that its quantum state resides primarily in the $\left\{ \left|0\right\rangle ,\left|1\right\rangle \right\} $
subspace, the operation effects NLA; whereas if the state has significant
support on higher photon components, then the amplification is not
noiseless anymore owing to the excess noise originating from the truncation
of the teleported state in Fock space. Thus, in the scheme of Fig.~\ref{LOPD NLA} (a), for a given input signal intensity, $N$ needs
to be sufficiently large so that the sub-signals that are inputs to
the quantum scissors operations are weak. When all the quantum scissors
operations succeed and all-but-one of the outputs of the $N-$combiner
are measured in the vacuum state, the device approaches NLA.

As mentioned earlier, NLA is probabilistic. The success probability
of NLA with $N-$quantum scissors is input state dependent
and decreases exponentially with $N.$ For an input coherent state
$\left|\alpha\right\rangle ,$ the success probability of an
$N-$scissors NLA drops with $N$ and the NLA gain $g$ as $P_{s}=1/\left(1+g^{2}\right)^{N}e^{-\left(1-g^{2}\right)\left|\alpha\right|^{2}}$~\cite{RL09}.

NLA is particularly relevant to CV entanglement distillation. Consider a TMSV state of mean photon number $\mu=\sinh^2 r$ 
\begin{align}
|\Psi\rangle_{AA'}=\sqrt{1-\chi^2}\sum_{n=0}^\infty \chi^n|n\rangle_{A}\otimes|n\rangle_{A'}, \ \chi=\tanh{r},
\end{align}
where $r$ is the squeezing parameter. Let one mode of the TMSV state be transmitted through a pure loss channel of transmissivity $\eta$, followed by $N-$quantum scissors, as shown in Fig.~\ref{half channel}, where $N$ is sufficiently large, so that it implements NLA of gain $g=\sqrt{(1-\kappa)/\kappa}$. The NLA results in a heralded state that is equivalent to the state obtained by transmitting one mode of a TMSV state of higher mean photon number $\mu'=\sinh^2 (\tanh^{-1}(\chi\sqrt{1+(g^2-1)\eta}))$ through a pure loss channel of improved transmissivity given by $\eta'=g^2\eta/\left(1+(g^2-1)\eta\right),$ as shown in~\cite{RL09}, which implies improved entanglement shared across the channel.

\section{CV entanglement distillation with multiple quantum scissors: Methods \label{sec:Entanglement-distillation-over}}

Our analysis of CV entanglement distillation across a pure loss bosonic channel using multiple, but finite $N-$quantum scissors with TMSV input as shown in Fig.~\ref{half channel} involves the following steps: a) determining the heralded non-Gaussian quantum state, b) determining the heralding success probability, and c) evaluating a figure of merit for the task.
\begin{figure}[H]
	\medskip{}
	\centering
	
	\begin{tabular}{c}
		\includegraphics[scale=0.575]{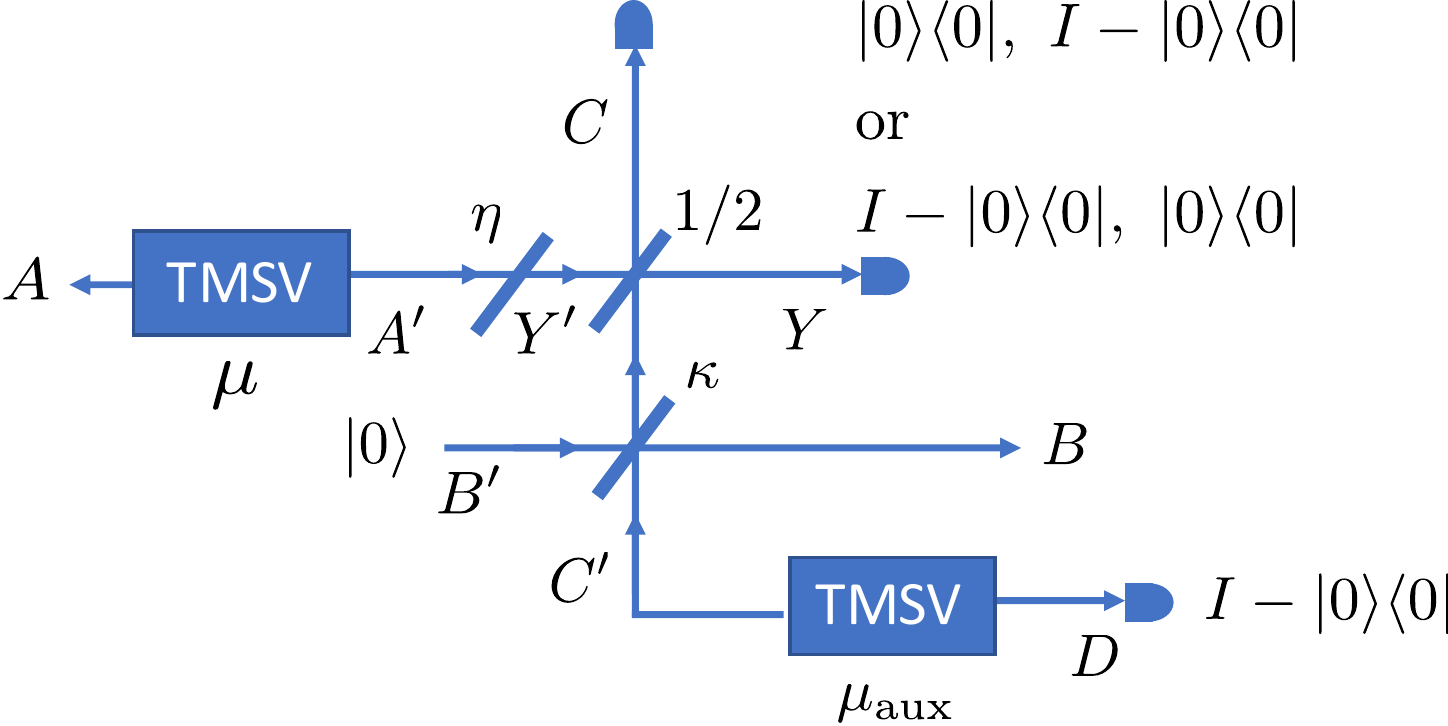}\tabularnewline
	\end{tabular}
	
	\caption{A modified quantum scissors implemented
		with a weak EPR source and ON-OFF projections, appended to a pure loss channel of transmissivity $\eta$ with TMSV state input. When the idler mode of the weak TMSV state is projected on $I-\left|0\right\rangle \left\langle 0\right|$ a single photon is heralded in the signal mode and injected into the
		quantum scissors.}
	\label{HC with mod scissors}
\end{figure}

\subsection{Quantum scissors based on heralded single photon injection
and ON-OFF photodetection}

The calculation of the non-Gaussian state heralded by $N-$quantum scissors of the form shown in Fig.~\ref{LOPD NLA} (b) quickly becomes cumbersome with increasing (but finite) $N$ (c.f.~\cite{DR2018,DR2017}). In order
to make the calculation more tractable, we emulate the quantum scissors
by replacing the photon number detectors with ON-OFF photodetection
$\left\{ \left|0\right\rangle \left\langle 0\right|,\,I-\left|0\right\rangle \left\langle 0\right|\right\} $,
and the single photon by one mode of a weak TMSV state (of mean photon
number $\mu_{\textrm{aux}}$). Figure~\ref{HC with mod scissors} shows
such an emulation of a single-quantum scissors NLA (denoted as $N=1-$NLA
hereafter) acting on a pure loss channel with TMSV input. The weak
TMSV state heralds a single photon state for injection into the quantum
scissors when the idler mode is projected onto $I-\left|0\right\rangle \left\langle 0\right|$.
Note that in all the calculations presented in
this paper, we choose $\mu_{\textrm{aux}}=0.01$. The success probability
of the quantum scissors operation is expected to be enhanced having
replaced the $\left\{ \left|0\right\rangle ,\left|1\right\rangle \right\} $
projection with $\left\{ \left|0\right\rangle \left\langle 0\right|,\,I-\left|0\right\rangle \left\langle 0\right|\right\} $
detection, whereas the quality of the amplification is expected to
be marginally degraded. However, the qualitative behavior of the scheme
still remains preserved as is shown in subsequent analyses.

\subsection{Heralded state and heralding success probability\label{subsec:Heralded-state-calc}}

We now describe the calculation of the heralded non-Gaussian quantum
state and heralding success probability for the scheme shown in Fig.~\ref{HC with mod scissors}. The scheme involves a bosonic system of five
modes, whose pre-measurement state is Gaussian, meaning the quantum
state is completely described by its first two moments. See Appendix~\ref{sec:Gaussian-States,-Unitaries} for the system description, and the initial and pre-measurement covariance matrices. See~\cite{Serafini17, WPGCRSL12} for a detailed account of CV quantum information including entanglement in CV Gaussian states. 

The quantum scissors operation is successful when, in Fig.~\ref{HC with mod scissors},
either one (but not both) of the modes $C$ and $Y,$ along with mode
$D$ are measured in the ``ON'' projection $I-\left|0\right\rangle \left\langle 0\right|.$
The heralded state clearly is non-Gaussian since the $I-\left|0\right\rangle \left\langle 0\right|$
projection is non-Gaussian. We capture the heralded non-Gaussian state
in modes $A$ and $B$ by its Husimi $Q$ function, defined as 
\begin{equation}
Q_{\rho}\left(\alpha,\beta\right)=\frac{\left\langle \alpha,\beta\right|\rho\left|\alpha,\beta\right\rangle }{\pi^{2}},\:\alpha,\beta\in\mathbb{C}.\label{eq: Q fn def}
\end{equation}
It can be determined as an overlap integral between the five mode
Gaussian state $\rho$ in modes $ABCDY$ and the projections $\left|\alpha\right\rangle \left\langle \alpha\right|_{A}\otimes\left|\beta\right\rangle \left\langle \beta\right|_{B}\otimes\left(I-\left|0\right\rangle \left\langle 0\right|\right)_{CD}^{\otimes2}\otimes\left|0\right\rangle \left\langle 0\right|_{Y}$
or $\left|\alpha\right\rangle \left\langle \alpha\right|_{A}\otimes\left|\beta\right\rangle \left\langle \beta\right|_{B}\otimes\left|0\right\rangle \left\langle 0\right|_{C}\otimes\left(I-\left|0\right\rangle \left\langle 0\right|\right)_{DY}^{\otimes2},$
normalized by the probability of the projections $\pi_{1}=\left(I-\left|0\right\rangle \left\langle 0\right|\right)_{CD}^{\otimes2}\otimes\left|0\right\rangle \left\langle 0\right|_{Y}$
and $\pi_{2}=\left|0\right\rangle \left\langle 0\right|_{C}\otimes\left(I-\left|0\right\rangle \left\langle 0\right|\right)_{DY}^{\otimes2},$
respectively, which constitute the success probability of the quantum
scissors operation. The heralded states corresponding to the two possible
successful projections turn out to be the same up to local phases.

The above overlap integrals are sums of Gaussian integrals that can
be evaluated efficiently (Appendix~\ref{sec:Non-Gaussian-Measurement}).
For example, the success probability for the projection $\pi_{1}$
on modes $CDY$ involves evaluating the overlap integral $P_{1}=\operatorname{Tr}\left(\pi_{1}\rho_{ABCDY}\right),$
\begin{align}
P_{1} & =\int d\xi\chi_{\rho_{ABCDY}}\left(\xi\right)\left(1-\chi_{0}\left(-\xi_{c}\right)\right)\nonumber \\
 & \times\left(1-\chi_{0}\left(-\xi_{D}\right)\right)\left(\chi_{0}\left(-\xi_{Y}\right)\right)_{Y},\ \xi\in\mathbb{R},\label{eq: NLA heralding prob v1}
\end{align}
where $\chi_{\rho_{ABCDY}}$ and $\chi_{0}=\chi_{\left|0\right\rangle \left\langle 0\right|}$
are the characteristic functions of the heralded non-Gaussian state
and the vacuum state, respectively. The success probability associated
with the other projection, namely $\pi_{2}$ also turn out to be the
same as (\ref{eq: NLA heralding prob v1}) due to symmetry between
modes $C$ and $Y,$ so that the total success probability is $P_{\textrm{succ}}'=\operatorname{Tr}\left(\left(\pi_{1}+\pi_{2}\right)\rho\right)=P_{1}+P_{2}=2P_{1}$.
The success probability of a quantum scissors operation with a deterministic
single photon injection into mode $C$ can be deduced from the above
success probability by renormalizing it with the probability of detecting
a single photon in the idler mode $D.$ That is, 
\begin{equation}
P_{\textrm{succ}}=P'_{\textrm{succ}}\bigg/\left(\frac{\mu_{\textrm{aux}}}{\mu_{\textrm{aux}}+1}\right),\label{eq: NLA heralding prob v2}
\end{equation}
where the scaling factor is the probability of observing the $\left|1\right\rangle \left\langle 1\right|$
projection on the idler mode of the TMSV state of mean photon number
$\mu_{\textrm{aux}}.$

The heralded state and heralding success probability calculations
for NLA with higher number of quantum scissors follow similarly to
the $N=1-$NLA case described above. For $N$ quantum scissors, the
success probability $P_{\textrm{succ}}$ is obtained by renormalizing
$P'_{\textrm{succ}}$ by a factor $\left(\nicefrac{\mu_{\textrm{aux}}}{\left(\mu_{\textrm{aux}}+1\right)}\right)^{N}.$

\subsection{Reverse coherent information}

The reverse coherent information (RCI) of a state $\rho_{AB}$ is defined as \cite{GPLS09,PGBL09,DJKR06,Hcube00}
\begin{equation}
I_{R}\left(\rho_{AB}\right):=H\left(A\right)_{\rho}-H\left(AB\right)_{\rho},\label{eq: RCI definition}
\end{equation}
where $H(A)_{\rho}$ is the von Neumann entropy of $\rho_A=\operatorname{Tr}_B(\rho_{AB})$ defined as $H\left(A\right)_{\rho}=-\operatorname{Tr}(\rho_A\log_2\rho_A)$ (and likewise $H(AB)_{\rho_{AB}}$).

For the CV entanglement distillation scheme of Fig.~\ref{HC with mod scissors}, we determine the optimal RCI that can be heralded by numerically optimizing over the mean photon number of the input TMSV state and the NLA gain~\footnote{See Supplemental Material for the code repository.}.

\subsection{A lower bound on the heralded reverse coherent information}

Evaluating the RCI of the state heralded upon successful operation
of quantum scissors-based NLA following the pure loss channel is perceived
to be nontrivial. As an interim remedy, we resort to calculating the
RCI of the covariance matrix of the heralded state, which by the Gaussian
extremality theorem \cite{WGC06} amounts to a lower bound on the
RCI of the heralded non-Gaussian state. 

\begin{figure}
	\begin{tabular}{c}
		\includegraphics[scale=0.24]{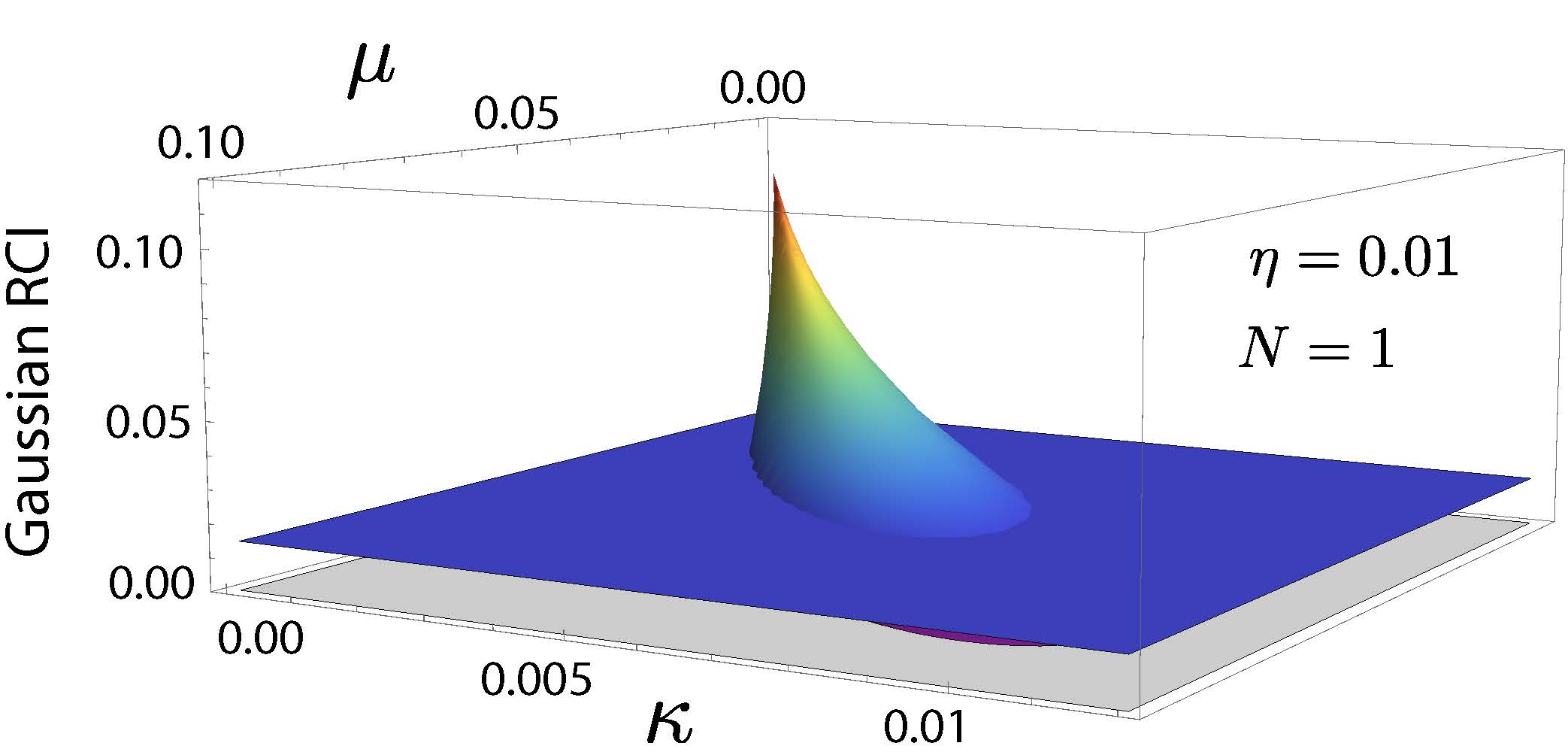}\tabularnewline
	\end{tabular}
	
	(a)
	
	\begin{tabular}{c}
		\includegraphics[scale=0.24]{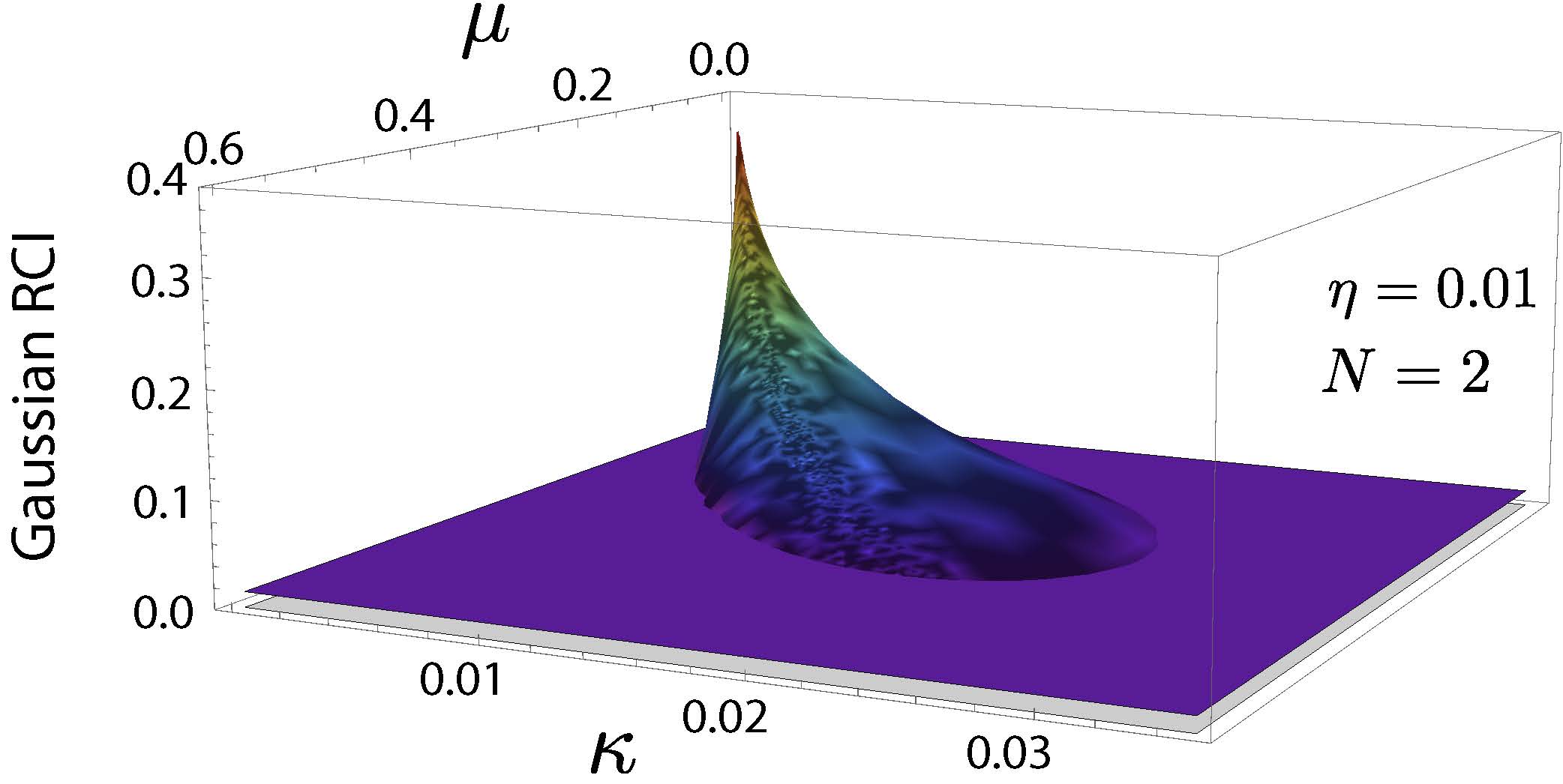}\tabularnewline
	\end{tabular}
	
	(b)
	
	\centering\caption{Gaussian RCI heralded across a pure loss channel of transmissivity
		$\eta=0.01$ appended with (a) $N=1-$NLA, (b) $N=2-$NLA as a function of the input
		TMSV state mean photon number and the gain of the NLA. The parameter
		$\kappa$ is related to the NLA gain by $\kappa=1/(1+g^{2}).$ The
		floors of the plots correspond to $C_{\rm direct}(\eta).$ The roughness in the surface in (b) is due to the finiteness of precision in our numerics.} 
	\label{rcivsmukappa01}
\end{figure}

\begin{figure}
	\begin{tabular}{c}
		\includegraphics[scale=0.22]{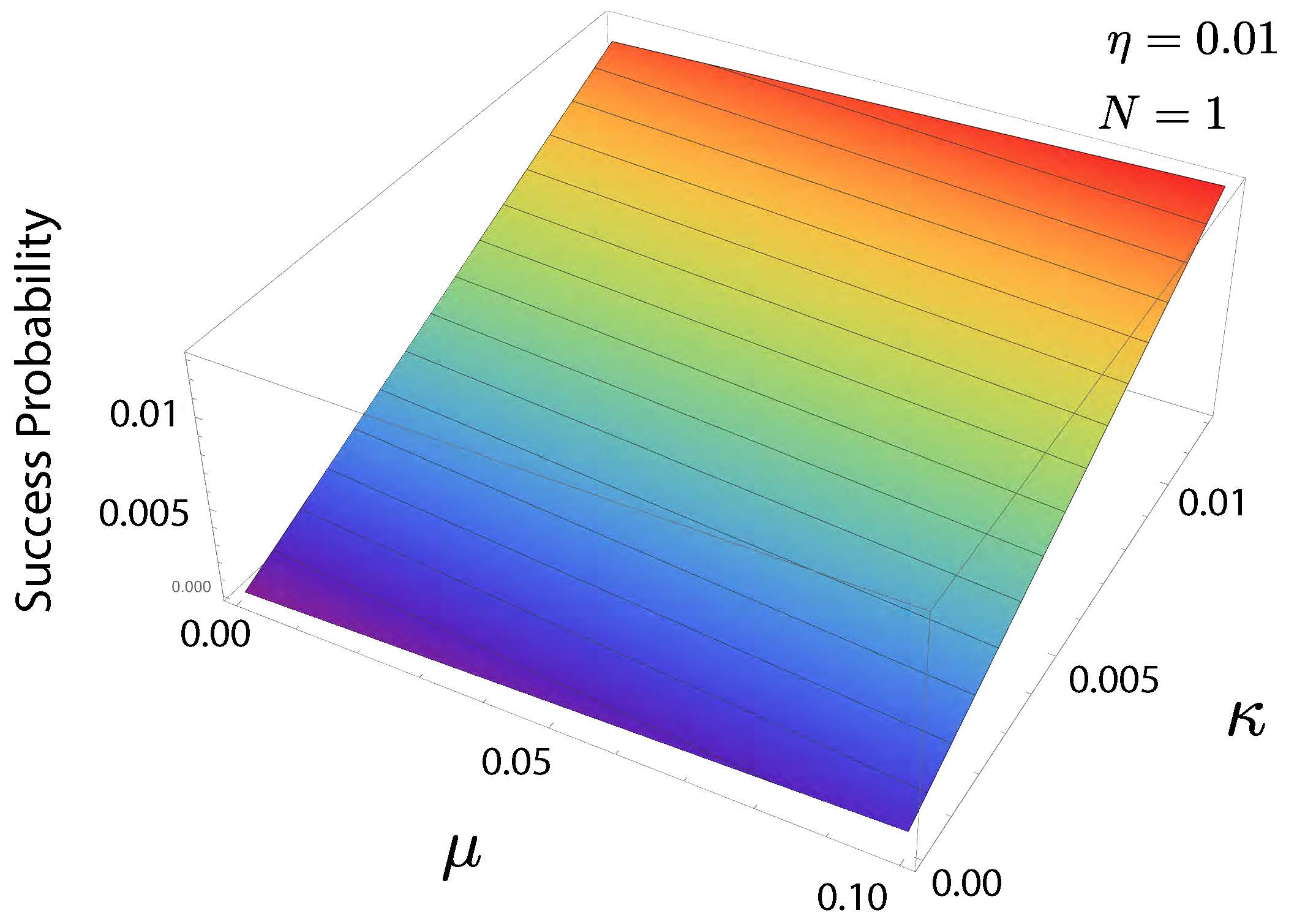}\tabularnewline
	\end{tabular}
	
	(a)
	
	\begin{tabular}{c}
		\includegraphics[scale=0.22]{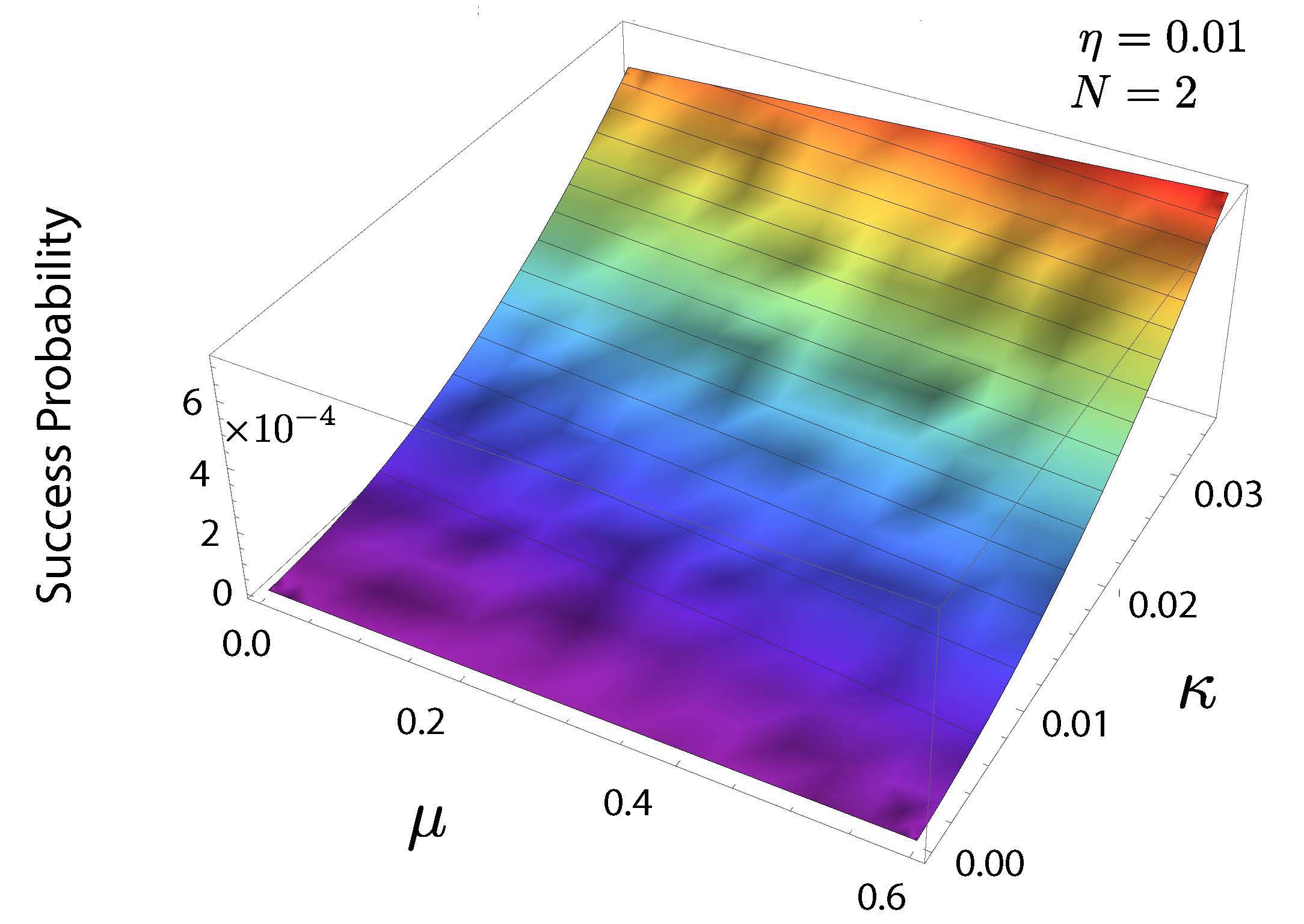}\tabularnewline
	\end{tabular}
	
	(b)
	
	\centering\caption{Heralding success probability with $N=1,2-$ NLA deployed on a pure
		loss channel of transmissivity $\eta=0.01$, as a function of the
		input TMSV state mean photon number and the gain of the NLA. The parameter
		$\kappa$ is related to the NLA gain by $\kappa=1/(1+g^{2}).$}
	\label{psuccvsmukappa01}
\end{figure}

The covariance matrix $V\left(\rho\right)$ can be determined from
the $Q$ function as
\begin{align}
V_{i,j}\left(\rho\right) & =2\intop d\mathbf{r}r_{i}r_{j}Q_{\rho}\left(\alpha,\beta\right)-\delta_{i,j}\nonumber \\
 & -2\intop d\mathbf{r}r_{i}Q_{\rho}\left(\alpha,\beta\right)\intop d\mathbf{r}r_{j}Q_{\rho}\left(\alpha,\beta\right),\label{eq: heralded cov from Q}
\end{align}
where $\mathbf{r}=\left(x_{1},x_{2},p_{1},p_{2}\right)^{\operatorname{T}}\in\mathbb{R}^{4},$
$\alpha=\left(x_{1}+ip_{1}\right)/\sqrt{2},\:\beta=\left(x_{2}+ip_{2}\right)/\sqrt{2},$
and the $Q$ function in real coordinates is $Q_{\rho}\left(x_{1},x_{2},p_{1},p_{2}\right)=\left\langle x_{1},x_{2},p_{1},p_{2}\right|\rho\left|x_{1},x_{2},p_{1},p_{2}\right\rangle /\left(4\pi^{2}\right).$

Given the covariance matrix $V\left(\rho_{AB}\right)$ of a bipartite
state $\rho_{AB},$ the RCI of $V\left(\rho_{AB}\right)$ follows
from (\ref{eq: RCI definition}) as 
\begin{align}
I_{R}\left(A'\left\langle B\right.\right)_{\rho_{AB}} & =H\left(A\right)_{\rho_{AB}}-H\left(AB\right)_{\rho_{AB}}\nonumber \\
 & =g\left(\vec{\nu}_{A}\right)-g\left(\vec{\nu}_{AB}\right),\label{eq: Gauss RCI}
\end{align}
where $g\left(x\right):=\left(\frac{x+1}{2}\right)\log_{2}\left(\frac{x+1}{2}\right)-\left(\frac{x-1}{2}\right)\log_{2}\left(\frac{x-1}{2}\right)$
is the entropy of a thermal state of mean photon number $\left(x-1\right)/2,$
and $\vec{\nu}_{A}$ and $\vec{\nu}_{AB}$ are the symplectic eigenvalues
of the covariance matrices corresponding to mode $A$ and modes $AB,$
respectively (Appendix~\ref{sec:Gaussian-States,-Unitaries}). We call this the Gaussian RCI of the state $\rho_{AB}.$

\section{Results\label{sec:Results}}

\begin{figure}
\begin{tabular}{c}
 \includegraphics[scale=0.28]{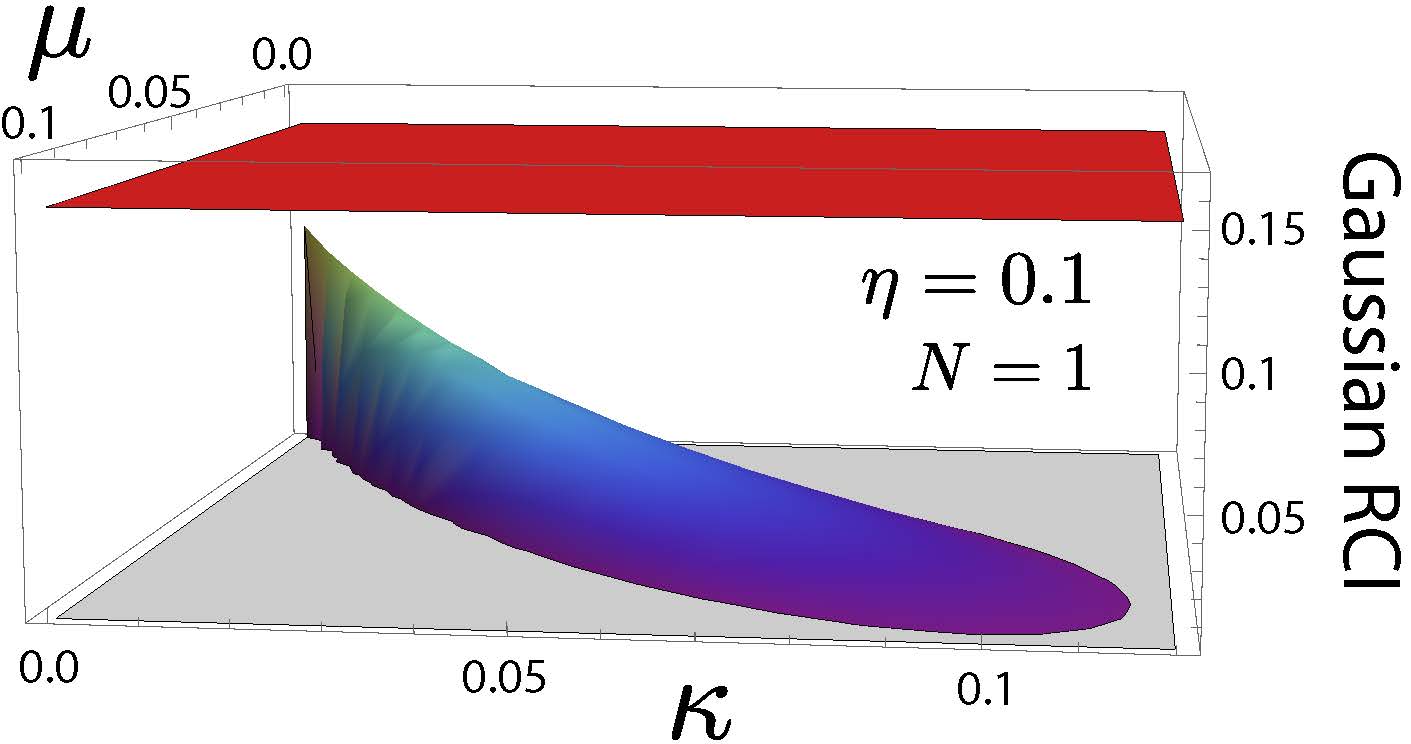}\tabularnewline
\end{tabular}

(a)

\begin{tabular}{c}
 \includegraphics[scale=0.28]{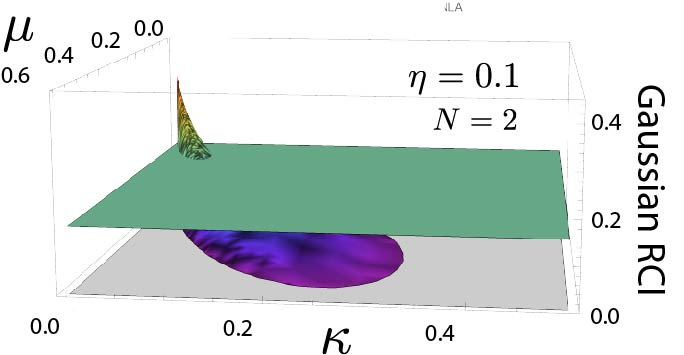}\tabularnewline
\end{tabular}

(b)
\centering\caption{Gaussian RCI heralded across a pure loss channel of transmissivity
	$\eta=0.1$ appended with $N=1,2-$NLA, as a function of the input
	TMSV state mean photon number and the gain of the NLA. The parameter
	$\kappa$ is related to the gain of the NLA by $\kappa=1/(1+g^{2}).$
	The planes in the plots correspond to $C_{\rm direct}(\eta).$ }
\label{rcivsmukappa_activation}

\end{figure}

\begin{figure}
\begin{tabular}{c}
 \includegraphics[scale=0.48]{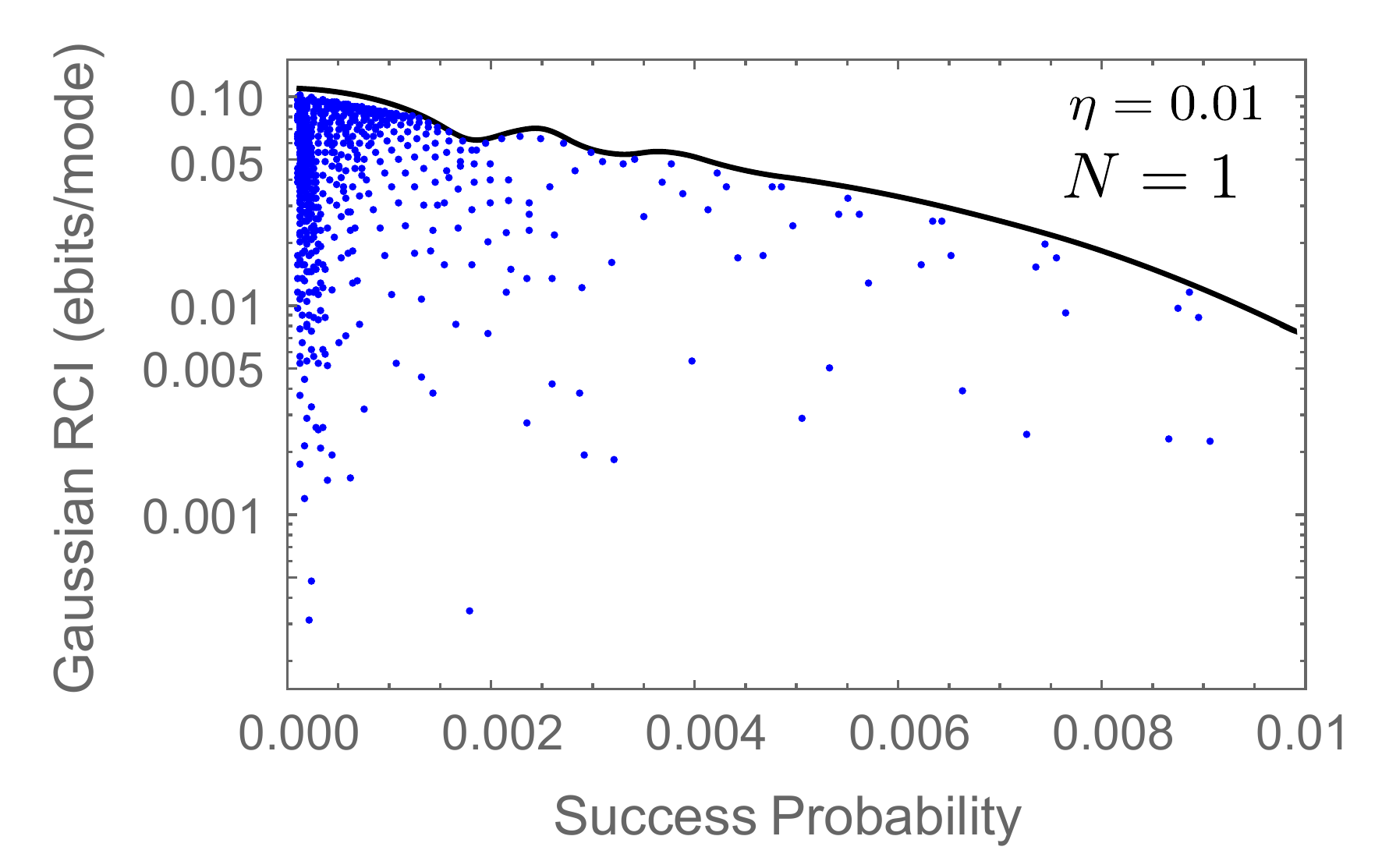}\tabularnewline
\end{tabular}

(a)

\begin{tabular}{c}
 \includegraphics[scale=0.48]{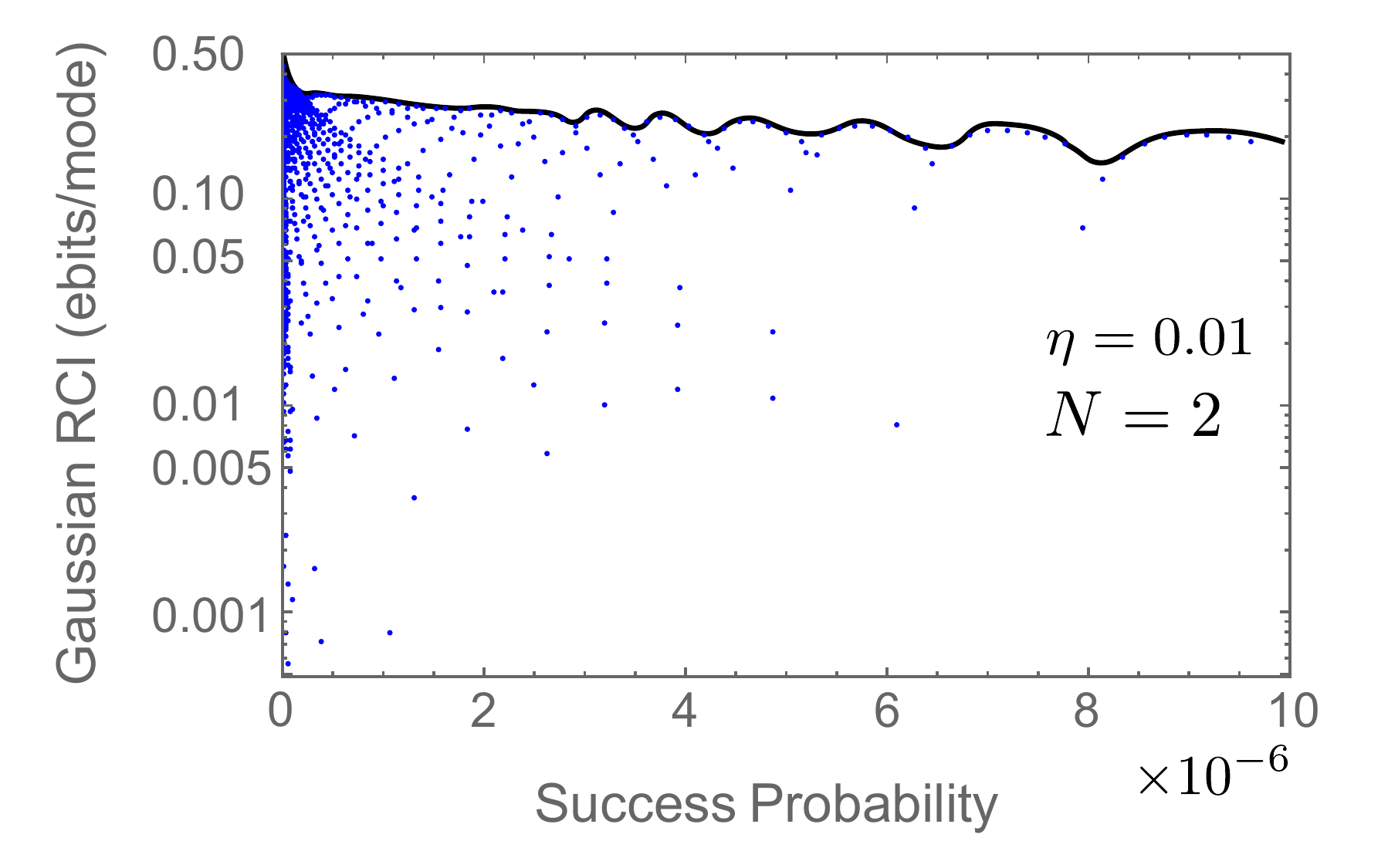}\tabularnewline
\end{tabular}

(b)

\centering\caption{Gaussian RCI heralded across $N=1,2-$NLA vs heralding success probability
	for a pure loss channel of transmissivity $\eta=0.01$.}
\label{HCgaussRCIscatter}

\end{figure}

\subsection{Gaussian reverse coherent information for $N=1,2-$NLA-assisted pure
loss channel}

In Figs.~\ref{rcivsmukappa01} and \ref{psuccvsmukappa01}, we plot
the Gaussian RCI and the heralding success probability for a pure loss channel appended with $N=1-$NLA and $N=2-$NLA.
The quantity is plotted as a function of the mean photon number $\mu$
of the TMSV input and the transmissivity $\kappa=1/(1+g^{2})$ of
the asymmetric beamsplitter in the quantum scissors ($g$ being the
NLA gain). We choose a channel of transmissivity $\eta=0.01.$ The
choice of a small transmissivity $\eta$ for illustration is such that
the amplification due to quantum scissors remains noiseless. The mean
photon number of the input TMSV state is optimized to determine the
best possible heralded Gaussian RCI. We make the following observations
from these Figures. a) The heralded Gaussian RCI exceeds the direct
transmission capacity $C_{\rm direct}(\eta)$ of Eq.~(\ref{plob}) (denoted by
the floor of both the 3-d plots in Fig.~\ref{rcivsmukappa01}) for
a certain regime of the parameters $\mu$ and $\kappa$ for both $N=1,2.$ b) The Gaussian RCI of $N=2-$NLA is about four times that of $N=1-$NLA. c) The increase
in the Gaussian RCI is accompanied by a steep decrease in the heralding
success probability of the NLA, which drops exponentially going from
$N=1$ to $N=2-$NLA as shown in Fig.~\ref{psuccvsmukappa01}.

For a different choice of $\eta,$ namely $\eta=0.1$, the optimal
heralded Gaussian RCI with $N=1-$NLA never exceeds $C_{\rm direct}(\eta),$
whereas with $N=2-$NLA it exceeds the bound, as shown in Fig.~\ref{rcivsmukappa_activation}. In other words, in such a parameter regime one quantum scissors is not enough, and a second quantum scissors is required to ``activate'' a heralded entanglement of higher quality than direct transmission. 

Since the increase in the heralded Gaussian RCI happens at the expense
of the decreased heralding success probability, it is important to
characterize this tradeoff so that the values of the parameters (TMSV
input mean photon number $\mu$ and NLA gain $g$) can be chosen optimally.
In Fig.~\ref{HCgaussRCIscatter}, we plot the Gaussian RCI as a function
of the success probability for the $N=1,2-$NLA-appended channel of
transmissivity $\eta=0.01,$ when $\mu$ and $g$ are optimized. The
outer envelope of this scatter plot thus represents the best possible
pairs of heralded RCI and heralding success probability attainable
using $N=1,2-$NLA. The oscillations in the outer envelope are a result of focusing on the Gaussian part (covariance matrix) of the heralded state, whose Q function is a sum of many Gaussian distributions and thus non-Gaussian. The envelope and the accompanying set of optimal
parameters thus may be of use in designing quantum repeater schemes
with quantum scissors based NLA. 

\subsection{Entanglement of formation and reverse coherent information lower
bounds for teleportation through NLA-assisted pure loss channel}

In \cite{Ralph2011}, a variant of the scheme in Fig. \ref{half channel}
was studied as quantum error correction for the transmission of quantum
continuous variable states over a lossy channel. The scheme is as
shown in Fig.~\ref{Ralph EC box with mod scissors}. Here, direct
transmission through the lossy channel is replaced by continuous-variable
quantum teleportation over a lossy entangled resource established
by sending one mode of a finite energy two-mode squeezed vacuum (TMSV)
state through the channel followed by NLA. The resource is referred
to as an error correction (EC) box, and is characterized by the mean
photon number of the teleportation resource TMSV state $\mu_{\textrm{res}},$
the number of quantum scissors $N,$ and the NLA gain $g.$ Using
a meticulous Fock basis calculation, Dias and Ralph \cite{DR2018}
recently showed that this scheme with $N=1-$NLA, when NLA is successful,
can herald states with higher entanglement than the state shared via direct transmission over the lossy channel. They considered the entanglement of formation (EOF)
(Appendix~\ref{sec:Entanglement-of-Formation}) of the heralded
covariance matrix (Gaussian part of the heralded non-Gaussian state)
as the figure of merit, which constitutes a lower bound on the entanglement
of formation of the heralded non-Gaussian state by the Gaussian extremality
theorem~\cite{WGC06} . 

\begin{figure}
\medskip{}

\begin{tabular}{c}
 \includegraphics[scale=0.5]{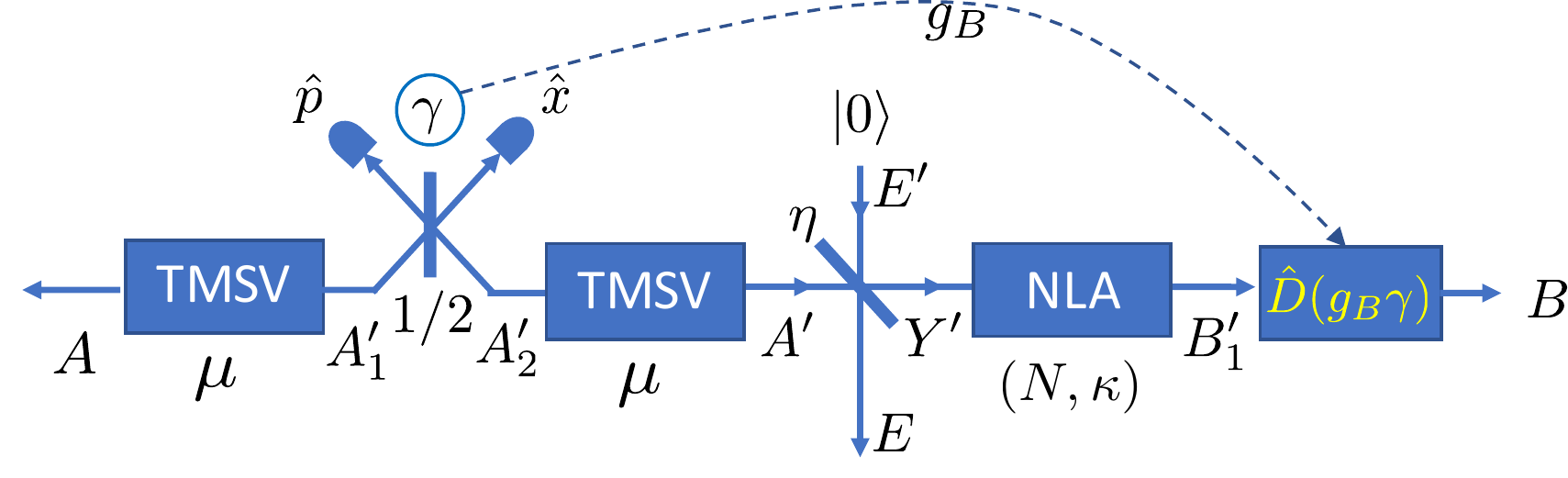}\tabularnewline
\end{tabular}

\centering\caption{One mode of a TMSV state teleported through a resource state consisting
of a lossy TMSV state aided by a $N=1-$NLA.}
\label{Ralph EC box with mod scissors}
\end{figure}

\begin{figure}
\begin{tabular}{c}
 \includegraphics[scale=0.5]{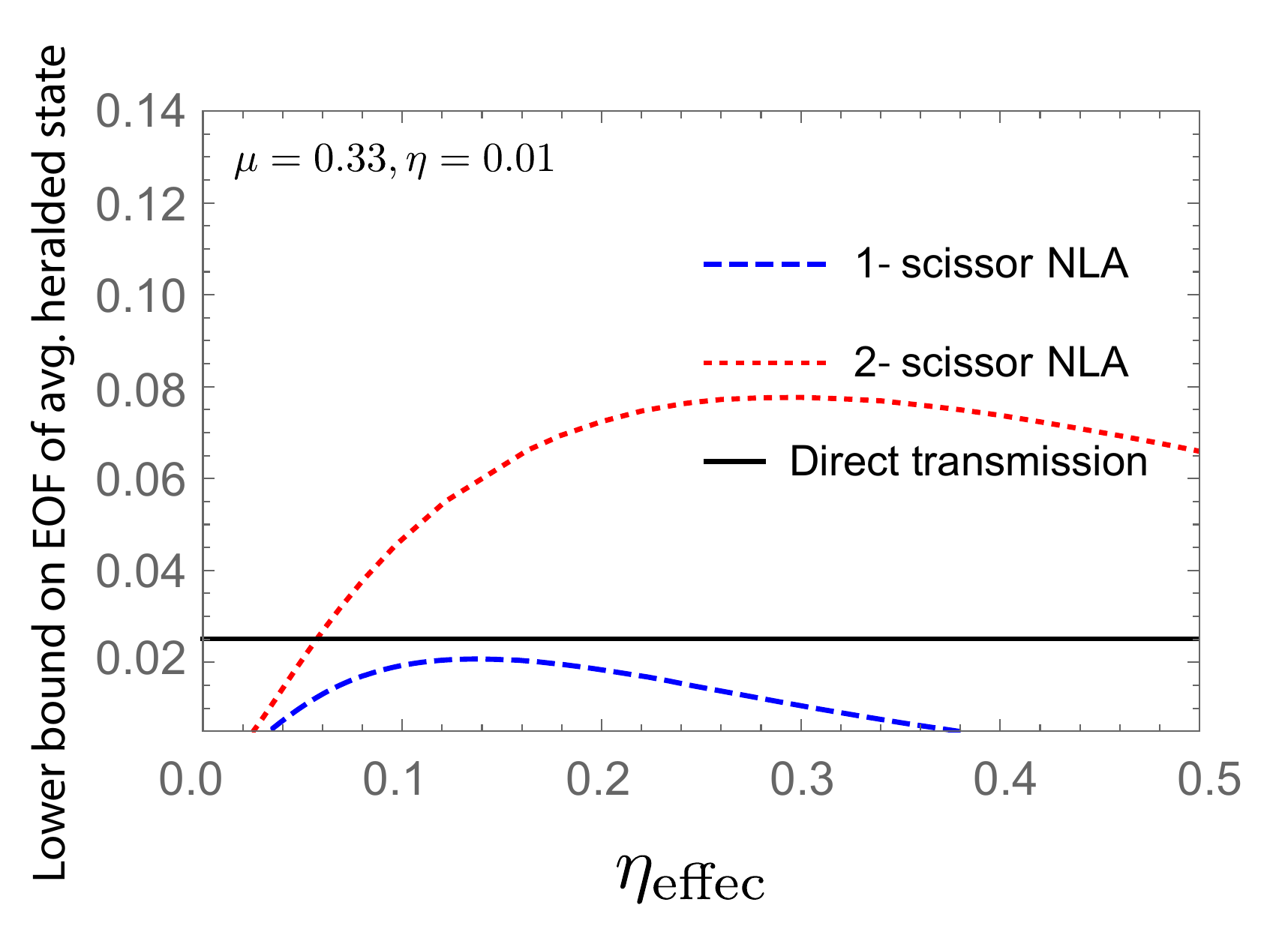}\tabularnewline
\end{tabular}

(a)

\begin{tabular}{c}
 \includegraphics[scale=0.5]{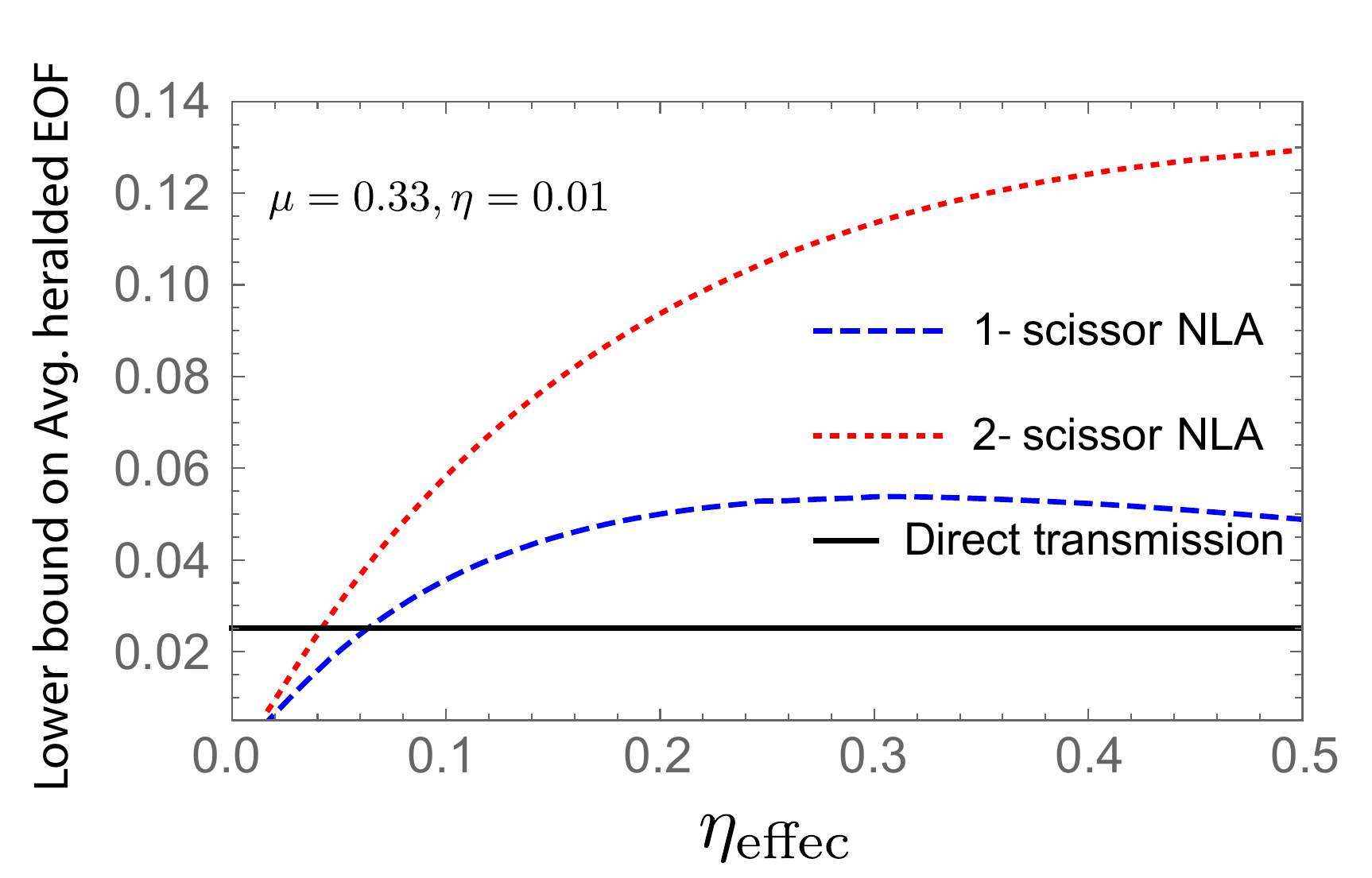}\tabularnewline
\end{tabular}

(b)

\centering\caption{Entanglement of formation (EOF) of the covariance matrix heralded
across the EC Box of \cite{Ralph2011} for a TMSV input, as a function
of the heralded effective transmission. The bare channel transmissivity
is chosen to be $\eta=0.01,$ mean photon number of the input TMSV
and the teleportation resource TMSV chosen to be equal $\mu_{\textrm{res}}=\mu=0.33,$
$N=1,2$, and the NLA gain $g$ is varied. a) The EOF of the covariance
matrix of the average state heralded across the error corrected channel.
b) The average of EOF of the conditional heralded covariance matrices
(conditioned on and averaged over the outcome of the dual homodyne
detection). The black bold line corresponds to the EOF for transmission across
the bare lossy channel. }
\label{Ralphgeof}
\end{figure}

\begin{figure}
\begin{tabular}{c}
 \includegraphics[scale=0.45]{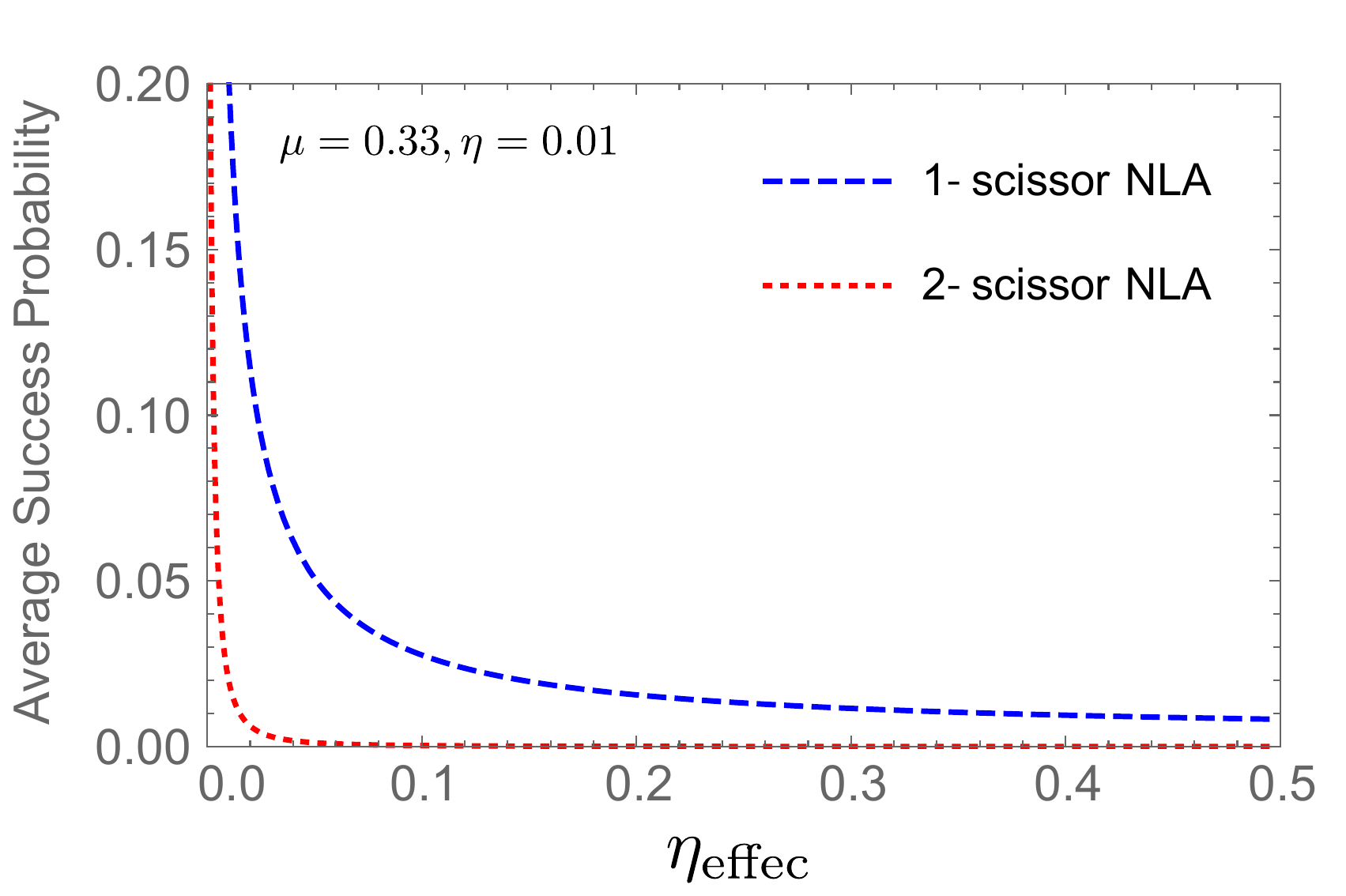}\tabularnewline
\end{tabular}

\centering\caption{Success Probability of \cite{Ralph2011}'s EC box for $N=1,2-$NLA
as a function of the heralded effective transmission. The bare channel
transmissivity is chosen to be $\eta=0.01,$ mean photon number of
the input TMSV and the teleportation resource TMSV chosen to be equal
$\mu_{\textrm{res}}=\mu=0.33,$ $N=1,2$, and the NLA gain $g$ is
varied.}
\label{Ralphsuccessprob}
\end{figure}

\begin{figure}
\begin{tabular}{c}
 \includegraphics[scale=0.45]{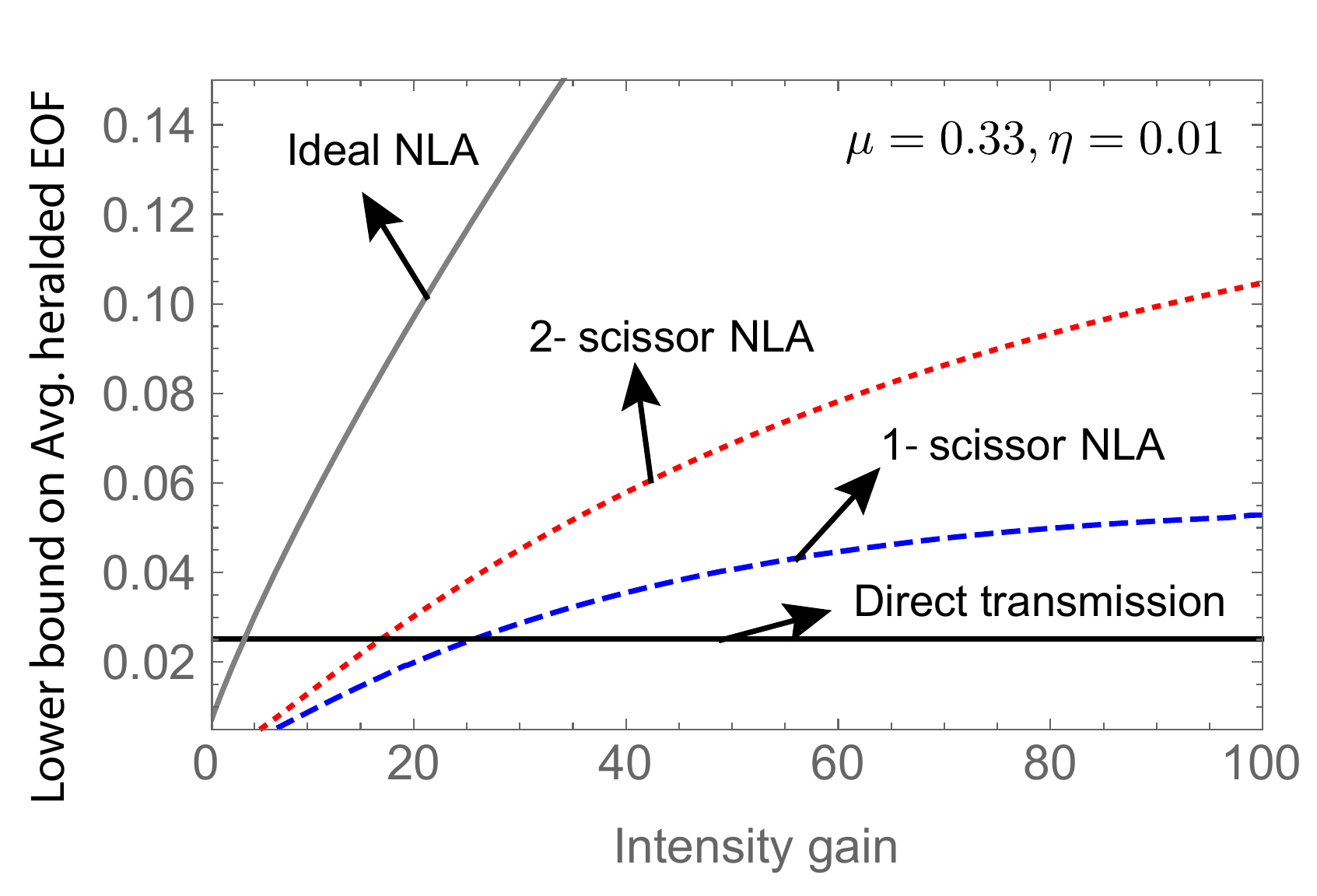}\tabularnewline
\end{tabular}

(a)

\begin{tabular}{c}
 \includegraphics[scale=0.45]{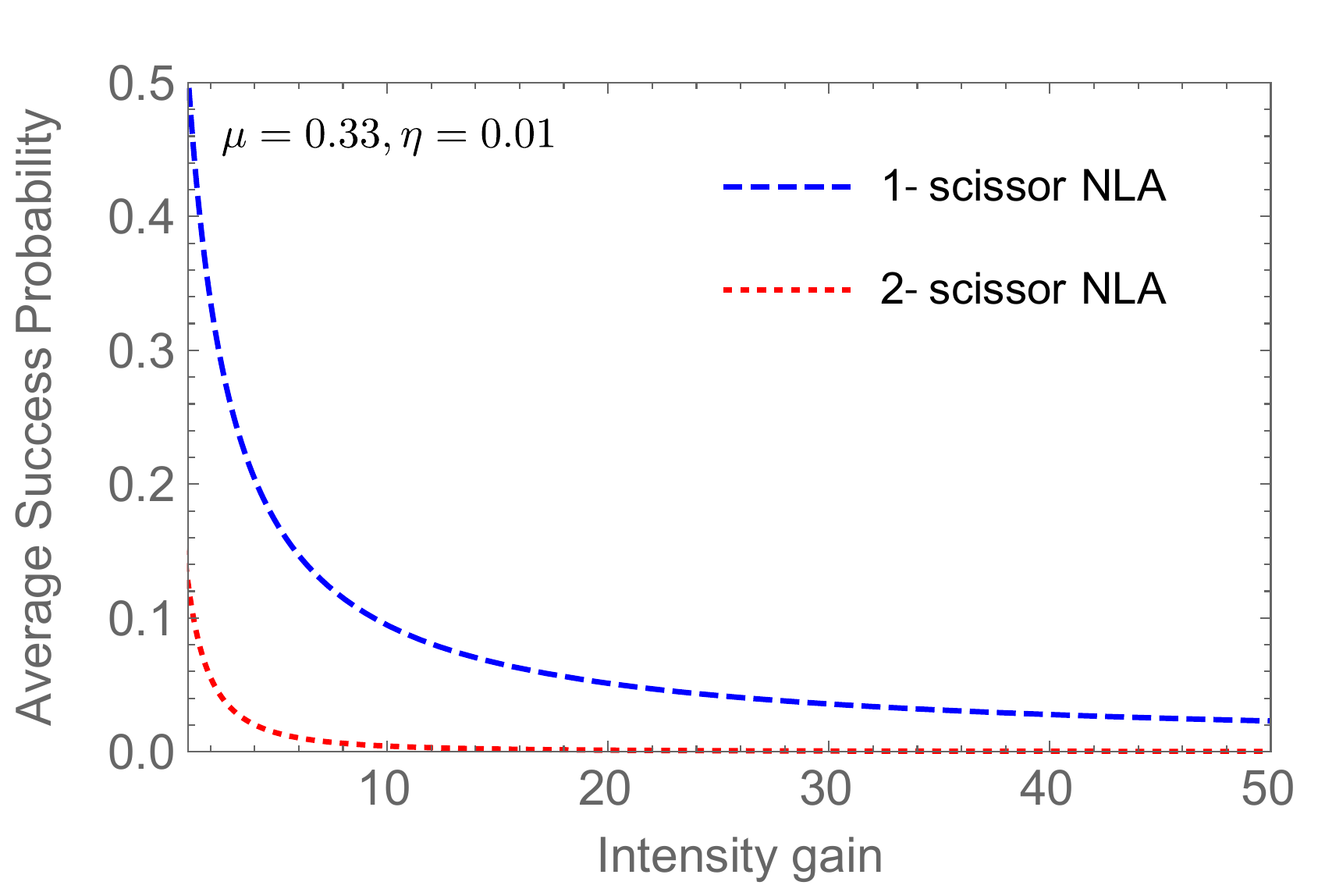}\tabularnewline
\end{tabular}

(b)

\centering\caption{EOF and success probability of the covariance matrix heralded by EC box for $N=1,2-$NLA as a function
of the NLA intensity gain $g^{2}$. }
\label{Ralphgeofpsuccvsgsq}
\end{figure}

\begin{figure}
\begin{tabular}{c}
 \includegraphics[scale=0.60]{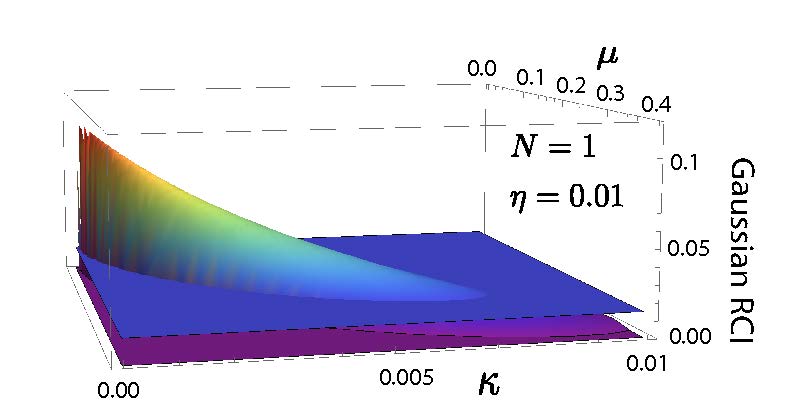}\tabularnewline
\end{tabular}

(a)

\begin{tabular}{c}
 \includegraphics[scale=0.65]{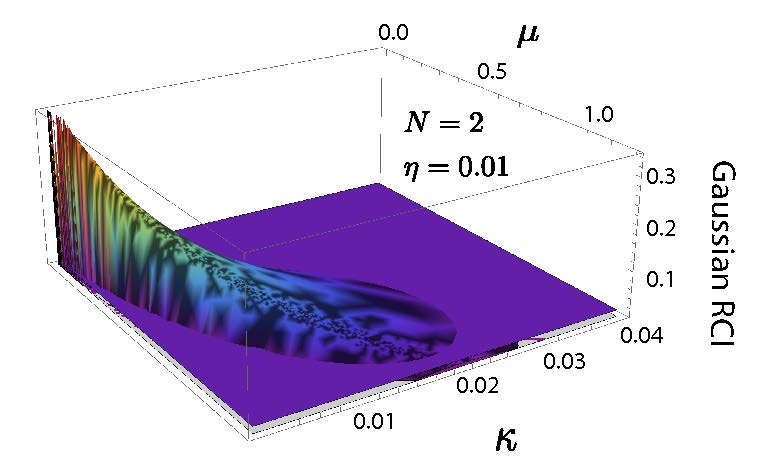}\tabularnewline
\end{tabular}

(b)

\centering\caption{Reverse coherent information of the covariance matrix heralded across
the scheme in Fig.~\ref{Ralph EC box with mod scissors} with $N=1,2-$NLA
and $\eta=0.01$. The parameter $\kappa$ is related to the gain of
the NLA by $\kappa=1/(1+g^{2}).$}
\label{Ralphgrci}
\end{figure}

The calculation of Dias and Ralph, however, is tedious and difficult
to be extended to NLA with multiple quantum scissors. The method based
on characteristic functions described in Sec.~\ref{subsec:Heralded-state-calc}
offers an efficient alternative means to determine the heralded non-Gaussian
state in this case. Using this method along with general ideas from
Gaussian conditional dynamics \cite{GLS16,Serafini17} to deal with
the teleportation elements such as dual homodyne detection (CV Bell
state measurement) and displacement (unitary correction) (Appendix~\ref{sec:Gaussian-Conditional-Dynamics}), we analyzed the effect
of NLA with $N=1,2-$quantum scissors in Fig.~\ref{Ralph EC box with mod scissors}.
We calculated a lower bound on the EOF
of the heralded state by evaluating yet another measure, the Gaussian
entanglement of formation (GEOF), on the Gaussian part of the heralded state,
which equals its EOF. (Appendix~\ref{sec:Entanglement-of-Formation}
for details.)

In this scheme, the non-Gaussian state heralded upon successful NLA
operation for a TMSV input is a function of the dual homodyne detection
outcome. In particular, both its first and second moments are dependent
on the outcome. There are two ways to quantify the performance of
the scheme: a) evaluating the figure of merit on the covariance matrix
of the average heralded state, or, b) evaluating the average of the
figure of merit applied on the conditional heralded covariance matrices,
where in both cases the averaging is with respect to the dual homodyne
measurement outcomes. We will call these $q_{1}$ and $q_{2},$ respectively. Operationally, the former captures the entanglement content of the average state heralded by the scheme, while the latter captures the entanglement content on a per copy basis, averaged over the copies.
Convexity of the EOF implies that $q_{2}\geq q_{1}$. It should be
noted that while the displacement correction associated with the teleportation
impacts the average heralded state and in turn its EOF, it does not
affect the EOF of the conditional heralded states, since the measure
is independent of the first moments.

In Fig.~\ref{Ralphgeof}, we plot $q_{1}$ and $q_{2}$ for a TMSV
input of mean photon number $\mu$ as a function of the effective
transmission parameter defined as $\eta_{\textrm{effec}}=g^{2}\eta\chi^{2},$
where $g$ is related to the gain of the NLA, and $\chi=\tanh\left(\sinh^{-1}\left(\sqrt{\mu_{\textrm{res}}}\right)\right)$.
The channel transmissivity is chosen to be $\eta=0.01,$ and the mean
photon number of the TMSV state at the input as well as in the EC
box (teleportation resource state) are chosen to be $\mu_{\textrm{res}}=\mu=0.33,$
(which corresponds to $\chi=1$), and the NLA amplitude gain $g$
is scanned over. In Fig.~\ref{Ralphgeof} (a), the quantity $q_{1}$ is plotted, where it has been optimized over a classical gain tuning
parameter that scales the dual homodyne outcome prior to the displacement
correction operation. We find that our curve for the $N=1-$NLA is
qualitatively similar, but below what was reported in \cite{DR2018}
for the same. This is as expected, since we have considered ON-OFF
heralding photodetection instead of perfect photon-number-resolving
(PNR) single photon detection in the quantum scissors. In addition,
we now have calculated the same figure of merit also for $N=2-$NLA.
We observe that the Gaussian lower bound on
the EOF of the heralded state increases significantly in going from $N=1$ to $N=2.$ In Fig.~\ref{Ralphgeof} (b),
the quantity $q_{2}$ is plotted, and as is expected due to convexity of the EOF, they
are higher than the corresponding curves in Fig.~\ref{Ralphgeof}
(a). %
In the remainder of this discussion, we will consider the quantity $q_{2}$ as the sole figure of merit for the scheme. 

In Fig.~\ref{Ralphsuccessprob}, the heralding success probability
of the scheme is plotted as a function of the effective transmission
for the same set of parameter values as chosen in Fig.~\ref{Ralphgeof}.
We observe that our curve for the $N=1$ case is qualitatively similar,
but slightly above the one reported in \cite{DR2018}. This is again
consistent with our choice of ON-OFF photodetection in place of single
photon detection in the quantum scissors.

In Fig.~\ref{Ralphgeofpsuccvsgsq}, the quantity $q_2$ and the heralding success probability of NLA are plotted as a
function of the NLA intensity gain $g^{2}.$ The EOF of direct transmission
through the lossy channel forms the benchmark. This is exceeded by
teleportation of the input through EC box with $N=1,2-$NLA. The figure
also shows the performance of ideal NLA (corresponding to $N\rightarrow\infty$) for comparison, which is calculated as the EOF of a TMSV state whose one mode undergoes a pure loss channel of effective transmissivity $\eta_\textrm{effec}=g^2\eta\chi^2$, $g$ being the NLA gain, and $\chi=\tanh\left(\sinh^{-1}\left(\sqrt{\mu_{\textrm{res}}}\right)\right)$~\cite{Ralph2011}.

The EOF, though a valid entanglement measure, is an upper bound on the distillable entanglement, whereas the RCI is a lower bound on the distillable entanglement, and hence more operationally relevant to entanglement distillation. We analyze the average RCI of the heralded conditional covariance matrices across the EC box. One key mathematical difference between the measures is that while the EOF is non-negative by definition, the RCI can take on negative values for separable states. We find that averaging over the dual homodyne outcomes is severely detrimental to the RCI and leaves the average RCI negative for nearly all choice of parameters. Post-selecting on the dual homodyne outcome over a small range of values around 0 (whose probability of occurrence is maximal among all possible outcomes of the measurement), we plot the average of the RCI of the heralded conditional covariance matrices in Fig.~\ref{Ralphgrci} for EC box with $N=1,2-$NLA. These curves are identical to the ones in Fig.~\ref{rcivsmukappa01}, but are attained with higher values of mean photon number $\mu.$ This is consistent with the fact that the teleportation of two TMSV heralds a new TMSV of a different mean photon number, and hence, the teleportation through the EC box converges to transmission of a different TMSV through the NLA assisted lossy channel.

\section{Conclusions\label{sec:Conclusions}}

To summarize, we studied continuous-variable entanglement distillation with quantum scissors-based NLA from a noisy two-mode squeezed state shared across a pure loss channel, as shown in Figs.~\ref{half channel} and \ref{HC with mod scissors}. We presented a calculation based on phase space characteristic
functions and the Husimi-$Q$ function to determine the non-Gaussian
state heralded by the quantum scissors and the associated heralding probability. The complexity of the calculation scales efficiently with the number of quantum scissors. Having determined the heralded non-Gaussian state, we evaluated its Gaussian RCI and numerically optimized it over the input mean photon number and the NLA gain, where the RCI is a lower bound on the distillable entanglement per copy of the shared state when many copies are available that is achievable using one-way LOCC.

We also applied the calculation to the proposal of \cite{Ralph2011}
that replaces transmission through a lossy channel with teleportation
over a NLA-error corrected lossy entangled resource state as shown
in Fig.~\ref{Ralph EC box with mod scissors}. Previous studies on
this scheme had determined the logarithmic
negativity and a Gaussian lower bound on the EOF of the states heralded by single quantum scissors.
We validated some of these findings with our method and extended the analysis to the case of two quantum scissors. Additionally, we calculated the Gaussian RCI for the scheme with one and
two quantum scissors.

Our main conclusions include: a) In CV entanglement distillation over a pure loss channel, using the quantum scissors, it is possible to herald entangled states whose RCI exceeds the direct transmission entanglement distillation capacity of the channel $C_{\rm direct}(\eta)$ of Eq.~(\ref{plob}). b) Increasing the
number of scissors amounts to higher Gaussian RCI of the heralded state. The increase in heralded Gaussian RCI comes at the expense of a significantly lower success probability. c) In some cases, a second quantum scissors can help herald a Gaussian RCI that exceeds $C_{\rm direct}(\eta)$, while a single quantum scissors could not help exceed the bound. d) In the NLA-CV error correction scheme of Fig.~\ref{Ralph EC box with mod scissors},
the Gaussian RCI heralded by the scheme, on average
(over the teleportation dual homodyne detection outcomes), does not
exceed $C_{\rm direct}(\eta)$. Yet, when post-selected over a narrow
window of the teleportation dual homodyne measurement outcomes around
zero, it can exceed the same. In this limit of a small
window of outcomes, the scheme converges to the scheme in Fig.~\ref{half channel}\textemdash a pure loss channel appended by quantum scissors-based NLA with an entangled state input, albeit with higher optimal input mean photon numbers.

\begin{figure}
	\begin{tabular}{c}
		\includegraphics[scale=0.45]{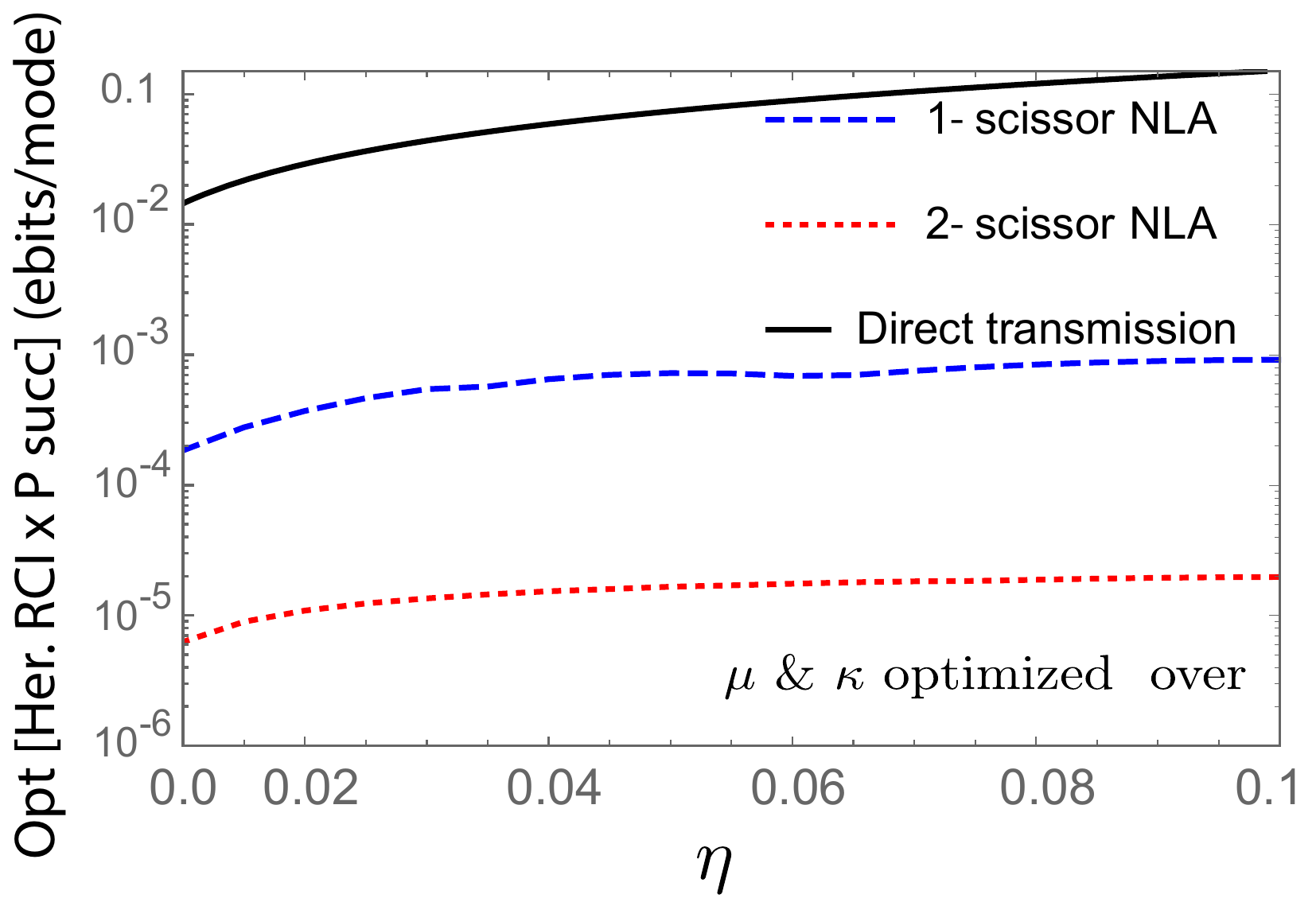}\tabularnewline
	\end{tabular}
	
	\centering\caption{Numerically optimized value of the product of heralded Gaussian RCI and the heralding success probability for $N=1,2-$NLA in the setup shown in Fig.~\ref{half channel}, as a function of the pure loss channel's transmissivity $\eta$. The value is optimized over the NLA gain and the mean photon number of the input TMSV state. The black curve corresponds to the direct transmission capacity $C_{\rm direct}(\eta)=-\log_2(1-\eta)$.}
	\label{optprod}
\end{figure}

Although our results show that the quantum scissors-NLA based scheme in Fig.~\ref{HC with mod scissors} can herald entangled states whose distillable entanglement exceeds $C_{\rm direct}(\eta)$, they do not demonstrate a quantum repeater. In order to demonstrate a quantum repeater using the quantum scissors, the product of heralded RCI and the heralding success probability, which is the true rate of entanglement distillation, must exceed $C_{\rm direct}(\eta)$. Clearly, as seen in Figs.~\ref{psuccvsmukappa01},~\ref{HCgaussRCIscatter},~\ref{Ralphgeofpsuccvsgsq} (b), the success probability of the quantum scissors drops steeply with increasing NLA gain so that the product of the heralded RCI and the heralding probability stays significantly below the direct transmission rate-loss tradeoff. This is further elucidated in Fig.~\ref{optprod}, where the product of heralded Gaussian RCI and the heralding success probability for the setup in Fig.~\ref{HC with mod scissors}, numerically optimized over the NLA gain parameter and the mean photon number of the input TMSV state, is plotted as a function of channel loss for number of quantum scissors $N=1,2$. The curves are found to be below $C_{\rm direct}(\eta)$, whereas the heralded RCI alone, e.g. at $\eta=0.01$, were seen to exceed $C_{\rm direct}(\eta)$ in Fig.~\ref{rcivsmukappa01}. Nevertheless, the fact that the quantum scissors are able to herald states with distillable entanglement higher than $C_{\rm direct}(\eta)$ paves the way towards constructing a multiplexing-based, second generation, CV quantum repeater scheme using the quantum scissors, as described in~\cite{SKG18}.

With regard to the experimental implementation of NLA using multiple quantum scissors, the primary imperfections to be considered include detection inefficiencies, single photon source inefficiencies and lack of photon number resolution. Our model for the quantum scissors in Fig.~\ref{HC with mod scissors} already addresses the latter two considerations, while the former remains to be analyzed. The heralded RCIs and the success probability of the quantum scissors are expected to drop when detection inefficiencies are considered.

\noindent\textbf{Note added-} During completion of this work, we
became aware of the work of~\cite{GORPR18}, which investigates the
use of quantum scissors-based NLA in the context of CV quantum key
distribution over a thermal loss channel using a method similar to
ours, and shows improved distance of transmission. They lower bound
the secret key generation rate in terms of the difference of a mutual
information and a Holevo information, whereas we have lower bounded
the entanglement and secret key distillation rates over a pure loss channel using the reverse
coherent information.

\begin{acknowledgments}
KPS thanks Animesh Datta, Christos Gagatsos, Stefano Pirandola, Mark M. Wilde and Zheshen Zhang for valuable discussions. This work was supported by the Office of Naval Research program Communications
and Networking with Quantum Operationally-Secure Technology for Maritime
Deployment (CONQUEST), awarded under Raytheon BBN Technologies prime
contract number N00014-16-C-2069, and a subcontract to University
of Arizona. This document does not contain technology or technical
data controlled under either the U.S. International Traffic in Arms
Regulations or the U.S. Export Administration Regulations.
\end{acknowledgments}

\bibliography{ref1}

\appendix
\begin{widetext}

\section{Mathematical Description of the System\label{sec:Gaussian-States,-Unitaries}}

\textbf{Gaussian States of a Bosonic Continuous-Variable (CV) system.}
A system of $M$ bosonic modes can be described by the creation and
annihilation operators $\hat{{a}}_{i}^{\dagger},\hat{{a}}_{i}$, such
that $\left[\hat{{a}}_{i},\hat{{a}}_{j}^{\dagger}\right]=\delta_{i,j},$
$\left[\hat{{a}}_{i},\hat{{a}}_{j}\right]=\left[\hat{{a}}_{i}^{\dagger},\hat{{a}}_{j}^{\dagger}\right]=0\ \forall\ i,j\in\left\{ 1,\ldots,M\right\} $,
and the corresponding quadrature operators $\hat{{x}}_{i}=\left(\hat{{a}}_{i}+\hat{{a}}_{j}^{\dagger}\right)/\sqrt{2},$
$\hat{{p}}_{i}=\left(\hat{{a}}_{i}-\hat{{a}}_{j}^{\dagger}\right)/\left(i\sqrt{2}\right)$,
such that $\left[\hat{{x}}_{i},\hat{{p}}_{j}\right]=i\delta_{i,j}.$

For a quantum state $\hat{{\rho}}$ defined on the $M-$mode Hilbert
space $\mathcal{H}^{\otimes M},$ a characteristic function can be
defined as the following operator Fourier transform (c.f., \cite{TJS15})
\begin{equation}
\chi\left(\xi\right)=\operatorname{Tr}\left(\hat{{\rho}}\hat{\mathcal{W}}\left(\xi\right)\right),\label{eq: char fn definition}
\end{equation}
where $\hat{\mathcal{W}}\left(\xi\right)$ is the Weyl operator 
\begin{equation}
\hat{\mathcal{W}}\left(\xi\right)=\exp\left(-i\xi^{T}\hat{\mathbf{r}}\right),\label{eq:Weyl op}
\end{equation}
and $\hat{\mathbf{r}}=\left(\hat{{x}}_{1},\ldots,\hat{{x}}_{M},\hat{{p}}_{1},\ldots,\hat{{p}}_{M}\right)^{T},$
$\xi=\left(\xi_{1},\ldots,\xi_{2M}\right)^{T},\ \xi_{i}\in\mathbb{R\ \forall}i\in\left\{ 1,\ldots,M\right\} .$
The characteristic function of (\ref{eq: char fn definition}) for
a quantum Gaussian state by definition is Gaussian, i.e., it can be
written as
\begin{equation}
\chi\left(\xi\right)=\exp\left(-\frac{{1}}{4}\xi^{T}\mathbf{V}\xi-i\mathbf{s}^{T}\xi\right),\label{eq: char fn Gauss}
\end{equation}
where $\mathbf{V}$ is the $2M\times2M$ real symmetric covariance
matrix defined as $\mathbf{V}_{i,j}=\left\langle \left\{ \hat{\mathbf{r}}_{i},\hat{\mathbf{r}}_{j}\right\} \right\rangle _{\rho}-2\left\langle \hat{\mathbf{r}}_{i}\right\rangle _{\rho}\left\langle \hat{\mathbf{r}}_{j}\right\rangle _{\rho}$
and $\mathbf{s}=\left\langle \hat{\mathbf{r}}\right\rangle _{\rho}$
is the $2M-$dimensional mean displacement vector.

The vacuum state is a Gaussian state with a covariance matrix equal
to the identity operator $I$. The two-mode squeezed vacuum (TMSV)
state of mean photon number $\mu=\sinh^{2}\left(r\right)$ ($r$ being
the squeezing parameter) is a Gaussian state with
\begin{equation}
\mathbf{V}^{\textrm{TMSV}}\left(\mu\right)=\left(\begin{array}{cc}
\mathbf{V}^{+}\left(\mu\right) & 0\\
0 & \mathbf{V}^{-}\left(\mu\right)
\end{array}\right),\;\mathbf{V}^{\pm}\left(\mu\right)=\left(\begin{array}{cc}
2\mu+1 & \pm2\sqrt{{\mu\left(\mu+1\right)}}\\
\pm2\sqrt{{\mu\left(\mu+1\right)}} & 2\mu+1
\end{array}\right),\label{eq:TMSV cov matrxi}
\end{equation}
and $\mathbf{s}=\vec{0}.$ The vacuum state is a special case of the
TMSV with $r=\mu=0.$

The covariance matrix of a quantum state satisfies the Heisenberg uncertainty principle $\mathbf{V}+i\Omega\geq 0$, where
\[
\Omega_n=\begin{pmatrix}0 & 1\\
-1 & 0
\end{pmatrix}\otimes I_{n\times n}.
\]
According to Williamson's theorem, a quantum covariance matrix $\mathbf{V}+i\Omega\geq 0$ can be diagonalized as 
\begin{equation}
\mathbf{V}=S_{V}\left(\mathbf{D}_{V}\oplus\mathbf{D}_{V}\right)S_{V}^{\operatorname{T}},
\end{equation}
where $S_{V}$ is a $2n\times2n$ Real symplectic matrix and $\mathbf{D}_{V}=\operatorname{diag}\left(\nu_{1},\ldots,\nu_{n}\right),$
where the $\nu_{j}$ are called the symplectic eigenvalues of $\mathbf{V}.$

\noindent\textbf{Gaussian Unitaries.} Unitary operators of the form
$\hat{U}_{\mathbf{s},S}=\exp\left(i\hat{H}\right),$ where $\hat{H}$
is a Hamiltonian that is at most quadratic in $\hat{\mathbf{r}}$
are called Gaussian unitaries. They map quantum Gaussian states into
quantum Gaussian states. An arbitrary Gaussian unitary operator can
be decomposed as 
\begin{equation}
\hat{U}_{\mathbf{s},S}=\hat{D}_{-\mathbf{s}}\hat{U}_{S},
\end{equation}
where $\mathbf{s}\in\mathbb{R}^{2M},$ $\hat{D}_{-\mathbf{s}}=\otimes_{j=1}^{M}\hat{D}_{-\left(\mathbf{s}_{j},\mathbf{s}_{M+j}\right)}$
is the displacement operator such that
\begin{equation}
\hat{D}_{-\left(\mathbf{s}_{j},\mathbf{s}_{M+j}\right)}=\exp\left(i\left(\mathbf{s}_{M+j}\hat{\mathbf{r}}_{j}-\mathbf{s}_{j}\hat{\mathbf{r}}_{+M+j}\right)\right),\label{eq:disp op}
\end{equation}
and $\hat{U}_{S}$ is a canonical Gaussian unitary operator generated
by a purely quadratic Hamiltonian.

A canonical Gaussian unitary operator $\hat{U}_{\mathbf{s},S}$ transforms
the quadrature operators as
\begin{equation}
\hat{\mathbf{r}}\rightarrow\hat{U}_{\mathbf{s},S}\hat{\mathbf{r}}\hat{U}_{\mathbf{s},S}^{\dagger}=S\hat{\mathbf{r}}+\mathbf{s},
\end{equation}
where $S$ is a $2M\times2M$ symplectic matrix and $\mathbf{s}\in\mathbb{R}^{2M}$.
Consequently, it transforms the first two statistical moments of an
arbitrary quantum state as
\begin{align}
\mathbf{s} & \rightarrow S\mathbf{s},\:V\rightarrow S\mathbf{V}S^{\operatorname{T}},
\end{align}
where $\mathbf{s}\in\mathbb{R}^{2M}$ is the mean vector and $\mathbf{V}$
the $2M\times2M$ covariance matrix. 

The two-mode beam splitter transformation is a canonical Gaussian
unitary transformation given by
\begin{equation}
\hat{U}_{BS}=\exp\left(i\theta\left(\hat{x}_{1}\hat{p}_{2}-\hat{p}_{1}\hat{x}_{2}\right)\right),
\end{equation}
where $t=\cos^{2}\theta\in\left[0,1\right]$ is the transmissivity
of the beamsplitter. The corresponding symplectic matrix is given
by%
\begin{equation}
S^{\left(t\right)}=\left(\begin{array}{cccc}
\sqrt{t} & \sqrt{1-t} & 0 & 0\\
-\sqrt{1-t} & \sqrt{t} & 0 & 0\\
0 & 0 & \sqrt{t} & \sqrt{1-t}\\
0 & 0 & -\sqrt{1-t} & \sqrt{t}
\end{array}\right).\label{eq:BS Symplectic}
\end{equation}

\noindent\textbf{Pure loss  channel.} The pure loss channel of transmissivity
$\eta$ is a Gaussian channel that maps Gaussian states to Gaussian states. It can be modeled as a beam splitter unitary transformation of the same transmissivity between the lossy mode and an environment
mode that is in the vacuum state. The action of the pure loss channel on the lossy mode is obtained by tracing out the environment mode, and can be expressed as
\begin{equation}
\mathcal{N}^{\left(\eta\right)}:\mathbf{V}\rightarrow X^{T}\mathbf{V}X+Y,\label{eq:pure loss channel math}
\end{equation}
where $X=\sqrt{\eta}I$ and $Y=\left(1-\eta\right)I.$ 

\noindent\textbf{Initial and pre-measurement state in Fig.~\ref{HC with mod scissors}.}
Since the pure loss channel is a Gaussian channel and the beam splitter
transformation is a Gaussian unitary operation, the scheme
depicted in Fig.~\ref{HC with mod scissors}, the quantum state across
the five modes initially, and prior to measurements in modes $A,\:B,\ C,\ Y,\ D$,
are both Gaussian state with zero displacement and covariance matrices
given by
\begin{align}
\mathbf{V}_{\textrm{initial}} & =\mathbf{V}_{AA'}^{\textrm{TMSV}}\left(\mu\right)\otimes\mathbf{V}_{C'D}^{\textrm{TMSV}}\left(\mu_{aux}\right)\otimes I_{B'},\\
\mathbf{V}_{\textrm{pre-meas}} & =S_{Y',C''}^{\left(1/2\right)}S_{B',C'}^{\left(\kappa\right)}\mathcal{N}_{A'\rightarrow Y'}^{\eta}\left(\mathbf{V}_{\textrm{initial}}\right)\left(S_{B',C'}^{\left(\kappa\right)}\right)^{T}\left(S_{Y',C''}^{\left(1/2\right)}\right)^{T},
\end{align}
respectively.

\section{Gaussian Measurements, Conditional Dynamics and CV Teleportation\label{sec:Gaussian-Conditional-Dynamics}}

A Gaussian measurement is a projection onto a quantum Gaussian state,
and thus is completely characterized by a mean vector and a covariance
matrix. 

\noindent\textbf{Homodyne and Heterodyne Detection.} Homodyne detection
on a single-mode, say of the $x-$quadrature, is the projection on
to the Gaussian state with mean vector and covariance matrix
\begin{align}
\mathbf{r}_{hom} & =\left(x_{hom},0\right)^{\operatorname{T}},\:\\
\mathbf{V}_{hom} & =\lim_{r\rightarrow\infty}\left(\begin{array}{cc}
\exp\left(-2r\right) & 0\\
0 & \exp\left(+2r\right)
\end{array}\right),
\end{align}
respectively, where $x_{hom}$ is measurement outcome and $r\in\mathbb{R}$
is the squeezing parameter. Heterodyne detection, likewise, is the
projection on to a coherent state with mean vector and covariance
matrix, respectively being,
\begin{align}
\mathbf{r}_{het} & =\left(x_{het},y_{het}\right)^{\operatorname{T}},\:\mathbf{V}_{het}=\left(\begin{array}{cc}
1 & 0\\
0 & 1
\end{array}\right),
\end{align}
where $x_{het}+iy_{het}\in\mathbb{C}$ is the measurement outcome.

\noindent\textbf{Dual Homodyne Detection.} Dual homodyne detection
is the continuous-variable analog of a Bell state measurement between
two modes $A$ and $B$. It is a projection of the two modes on to
a displaced EPR state (displaced infinitely squeezed TMSV state, which
is realized by mixing the two modes on a 50:50 beamsplitter, following
by orthogonal homodyne detections on the two modes ($\hat{x}$ measurement
on one mode and $\hat{p}$ measurement on the other). The mean vector
and covariance matrix of the measurement after the beam splitter transformation
of the two modes is given by 
\begin{align}
\mathbf{r}_{Dual-hom} & =\left(\gamma_{x},0,0,\gamma_{y}\right)^{\operatorname{T}},\\
\mathbf{V}_{Dual-hom} & =\lim_{r\rightarrow\infty}\left(\begin{array}{cc}
\exp\left(-2r\right) & 0\\
0 & \exp\left(-2r\right)
\end{array}\right)\oplus\left(\begin{array}{cc}
\exp\left(+2r\right) & 0\\
0 & \exp\left(-2r\right)
\end{array}\right),
\end{align}
where $\gamma_{x}+i\gamma_{y}\in\mathbb{C}$ is the measurement outcome.

\noindent\textbf{Gaussian Conditional Dynamics and overlap integrals.}
Consider a continuous-variable system of $n$ modes. Let $AB$ be
a bipartition of the modes such that subsystem $B$ consists of $m$
modes and subsystem $A$ consists of the remaining $n-m$ modes. Let
\begin{equation}
\mathbf{s}=\left(\begin{array}{c}
\mathbf{s}_{A}\\
\mathbf{s}_{B}
\end{array}\right),\:\mathbf{V}=\left(\begin{array}{cc}
\mathbf{V}_{A} & \mathbf{V}_{AB}\\
\mathbf{V}_{AB}^{\operatorname{T}} & \mathbf{V}_{B}
\end{array}\right)
\end{equation}
be the mean vector and covariance matrix of a quantum Gaussian state
$\hat{\rho}$ over the systems $A$ and $B.$ The quantum state obtained
in mode $A$ by tracing out subsystem $B,$ namely $\hat{\rho}_{A}=\operatorname{Tr}_{B}\left(\rho_{AB}\right)$
is also a quantum Gaussian state with mean vector and covariance matrix
given by
\begin{equation}
\mathbf{s}=\begin{array}{c}
\mathbf{s}_{A}\end{array},\:\mathbf{V}=\mathbf{V}_{A},
\end{equation}
respectively. On the other hand, when the subsystem $B$ is measured
by a Gaussian projective operator $\hat{\rho}^{G}$ of mean vector
$\mathbf{r}_{m}\in\mathbb{R}^{2m}$ and covariance matrix $\mathbf{V}_{m}$,
then the quantum state $\hat{\rho}_{A}$ conditioned on the measurement
outcome $\mathbf{r}_{m}\in\mathbb{R}^{2m}$ is a quantum Gaussian
state too, but its mean vector and covariance matrix are given by
\cite{GLS16,Serafini17} 
\begin{align}
\mathbf{s} & =\mathbf{s}_{A}+\mathbf{V}_{AB}\frac{1}{\mathbf{V}_{B}+\mathbf{V}_{m}}\left(\mathbf{r}_{m}-\mathbf{s}_{B}\right),\nonumber \\
\mathbf{V} & =\mathbf{V}_{A}-\mathbf{V}_{AB}\frac{1}{\mathbf{V}_{B}+\mathbf{V}_{m}}\mathbf{V}_{AB}^{\operatorname{T}},\label{eq:Gauss cond dynamics}
\end{align}
where the probability density function of the outcome $\mathbf{r}_{m}$
is given by the Gaussian overlap integral $p\left(\mathbf{r}_{m}\right)=\operatorname{Tr}\left(\hat{\rho}_{B}^{G}\hat{\rho}_{AB}\right)$,
which evaluates to %
{} 
\begin{equation}
p\left(\mathbf{r}_{m}\right)=\frac{\exp\left(-\left(\mathbf{r}_{m}-\mathbf{s}_{B}\right)^{\operatorname{T}}\frac{1}{\mathbf{V}_{B}+\mathbf{V}_{m}}\left(\mathbf{r}_{m}-\mathbf{s}_{B}\right)\right)}{\pi^{m}\sqrt{{\det\left(\mathbf{V}_{B}+\mathbf{V}_{m}\right)\}}}}.\label{eq:Gauss outcome prob density}
\end{equation}

\noindent\textbf{CV Teleportation of a TMSV across the EC Box of \cite{Ralph2011}.}
Consider the scheme in Fig.~\ref{Ralph EC box with mod scissors}.
Since the dual homodyne detection, the lossy channel, and the beamsplitters
in the quantum scissors are all Gaussian operations, the joint quantum
state across the modes prior to the measurements in the quantum scissors
is Gaussian. The mean and covariance matrix of this Gaussian can be
written down using (\ref{eq:TMSV cov matrxi}), (\ref{eq:BS Symplectic}),
(\ref{eq:pure loss channel math}) and (\ref{eq:Gauss cond dynamics}). 

Based on the observed dual homodyne outcome $\gamma,$ after the NLA
operation, a displacement correction unitary is applied on the modes
$A$ and $B,$ where these modes are displaced back by $g_{A}\left(-\gamma_{x},-\gamma_{y}\right)$
and $g_{B}\left(-\gamma_{x},+\gamma_{y}\right)$. Here $g_{A},\,g_{B}$
are classical gain parameters, which can be optimized over.

\section{Non-Gaussian Measurement based on ON-OFF Photodetection \& Gaussian
Overlap Integrals\label{sec:Non-Gaussian-Measurement}}

ON-OFF photodetection is a measurement scheme described by the positive
operator valued measure (POVM) elements
\begin{equation}
\Pi_{0}=\left|0\right\rangle \left\langle 0\right|,\:\Pi_{1}=I-\Pi_{0},
\end{equation}
where the projective measurement $\Pi_{0}$ is Gaussian, but $\Pi_{1}$
is not. In the modified quantum scissors operation considered in this
work, both in Figs.~\ref{HC with mod scissors}, \ref{Ralph EC box with mod scissors},
the heralding measurements of NLA are based on ON-OFF photodetection.

When the subsystem $B$ consisting of $m$ out of $n$ modes of a
CV system $AB$ in a quantum Gaussian state $\hat{\rho}_{AB}$ is
measured with OFF photodetection ($\Pi_{0}$ projection) on all the
$m$ modes, the conditional (Gaussian) quantum state on subsystem
$A$ and the probability of obtaining the OFF outcome across the $m$
modes follow from (\ref{eq:Gauss cond dynamics}) and (\ref{eq:Gauss outcome prob density}),
respectively, with $\mathbf{V}_{m}=I^{\otimes m}$ and $\mathbf{r}_{m}=0$.
The latter is the overlap integral $\operatorname{Tr}\left(\Pi_{0}\hat{\rho}\right),$
and simplifies to
\begin{equation}
p_{\bar{0}}=\operatorname{Tr}\left(\left(\Pi_{0}^{\otimes m}\right)_{B}\hat{\rho}_{AB}\right)=\frac{2^{m}\exp\left(-\mathbf{s}_{B}^{\operatorname{T}}\frac{1}{\mathbf{V}_{B}+I^{\otimes m}}\mathbf{s}_{B}\right)}{\sqrt{{\det\left(\mathbf{V}_{B}+I^{\otimes m}\right)}}},
\end{equation}
where $\mathbf{s}_{B}$ and $\mathbf{V}_{B}$are the mean and covariance
matrix of the modes in $B.$ 

Likewise, the probability of observing $\Pi_{1}$ is all the $m$
modes is given by
\begin{align}
p_{\bar{1}} & =\operatorname{Tr}\left(\left(\Pi_{1}^{\otimes m}\right)_{B}\hat{\rho}_{AB}\right)\\
 & =\operatorname{Tr}\left(\left(\Pi_{1}^{\otimes m}\right)_{B}\hat{\rho}_{B}\right)\\
 & =\operatorname{Tr}\left(\left(\mathbb{I}-\Pi_{0}\right)_{B}^{\otimes m}\hat{\rho}_{B}\right)\\
 & =\sum_{\tau\in\mathcal{P\left(K\right)}}\left(-1\right)^{\left|\tau\right|}\frac{2^{\left|\tau\right|}\exp\left(-\mathbf{s}_{\tau}^{\operatorname{T}}\frac{1}{\mathbf{V}_{\tau}+I_{\left|\tau\right|}}\mathbf{s}_{\tau}\right)}{\sqrt{{\det\left(\mathbf{V}_{\tau}+I_{\left|\tau\right|}\right)}}},
\end{align}
where $\mathcal{K}$ is the set of all $m$ modes contained in system
$B,$ $\mathcal{P\left(K\right)}$ the powerset of $\mathcal{K},$
i.e., the set of all subsets of $\mathcal{K}$ (inclusive of the null
element), $\mathbf{s}_{\tau}$ and $\mathbf{V}_{\tau}$ are the mean
vector and covariance matrix of the reduced quantum state on the modes
in element $\tau\in\mathcal{P\left(K\right)}$ and $I_{\left|\tau\right|}$
is the identity matrix of dimension $\left|\tau\right|.$ Though,
in this case the post measurement state on subsystem $A$ is non-Gaussian,
and hence cannot be captured using (\ref{eq:Gauss cond dynamics})
anymore. Nevertheless, the Husimi $Q$ function of the non-Gaussian
state on the modes in subsystem $A$ can be written down, e.g., when
$A$ consists of two modes $\hat{a},\,\hat{b},$ as
\begin{align}
Q\left(\alpha,\beta\right) & =\frac{\operatorname{Tr}\left(\left(\left|\alpha\right\rangle \left\langle \alpha\right|_{a}\otimes\left|\beta\right\rangle \left\langle \beta\right|_{b}\right)_{A}\otimes\left(\Pi_{1}^{\otimes m}\right)_{B}\hat{\rho}_{AB}\right)}{\pi^{2}p_{\bar{1}}}\\
\Rightarrow Q\left(\alpha_{x},\beta_{x},\alpha_{y},\beta_{y}\right) & =\frac{\sum_{\tau\in\mathcal{P\left(K\right)}}\left(-1\right)^{\left|\tau\right|}\frac{2^{\left|\tau\right|+2}\exp\left(-\left(\mathbf{s}_{\tau\cup A}-\mathbf{r}_{\tau\cup A}\right)^{\operatorname{T}}\frac{1}{\mathbf{V}_{\tau\cup A}+I_{\left|\tau\right|+2}}\left(\mathbf{s}_{\tau\cup A}-\mathbf{r}_{\tau\cup A}\right)\right)}{\sqrt{{\det\left(\mathbf{V}_{\tau\cup A}+I_{\left|\tau\right|+2}\right)}}}}{4\pi^{2}p_{\bar{1}}},
\end{align}
where $\alpha=\left(\alpha_{x}+i\alpha_{y}\right)/\sqrt{2}$ (and
likewise $\beta$), and $\mathbf{r}_{\tau\cup A}$ is the zero vector
except for the entries corresponding to the modes in $A,$ which take
the values $\left(\alpha_{x},\beta_{x},\alpha_{y},\beta_{y}\right).$ 

The same approach can be used to the construct the $Q$ function that
is heralded when some of the modes in $B$ are projected onto $\Pi_{0},$
while some others are projected onto $\Pi_{1},$ which is how we construct
the $Q$ function heralded by the $N-$quantum scissors NLA operations.

\section{Entanglement of Formation\label{sec:Entanglement-of-Formation}}
\begin{defn}
\label{def:eof}The entanglement of formation (EOF) of a bipartite
state $\rho_{AB}$ is defined as \cite{BDSW96}
\begin{equation}
E_{F}\left(\rho_{AB}\right):=\inf\left\{ \sum_{k}\lambda_{k}E\left(\left|\Psi_{k}\right\rangle \right)\left|\rho_{AB}=\sum_{k}\lambda_{k}\left|\Psi_{k}\right\rangle \left\langle \Psi_{k}\right|\right.\right\} ,
\end{equation}
where $\left|\Psi_{k}\right\rangle $ are entangled pure states and
$E\left(\left|\Psi_{k}\right\rangle \right)$ is the entanglement
entropy of $\left|\Psi_{k}\right\rangle .$ 

It is the minimum amount of pure entanglement required to construct
the state $\rho_{AB}.$ The EoF is non-increasing under local operations
and classical communication (LOCC).
\end{defn}

\begin{defn}
\label{def:geof}The Gaussian entanglement of formation (GEOF) of
a bipartite state $\rho_{AB}$ of mean vector $d$ and $4n\times4n$
dimensional covariance matrix $V$ ($2n$ total modes) is defined
as \cite{WGKWC04}
\begin{equation}
E_{G}\left(\rho_{AB}\left(V,d\right)\right):=\inf_{\lambda}\left\{ \int\lambda\left(dV_{p},d\xi\right)E\left(\Psi_{AB}^{G}\left(V_{p},\xi\right)\right)\left|\rho_{AB}=\int\lambda\left(dV_{p},d\xi\right)\Psi_{AB}^{G}\left(V_{p},\xi\right)\right.\right\} ,
\end{equation}
where $\Psi_{AB}^{G}$ are entangled Gaussian pure states and $\lambda$
is a measure in probability space. For a $n\left|n\right.-$mode bipartite
state (total $2n$ modes), the GEoF is given by
\begin{align}
E_{G}\left(\rho_{AB}\left(V,d\right)\right) & =\sum_{k=1}^{n}H\left(r_{k}\right),\\
H\left(r\right) & =\cosh^{2}\left(r\right)\log_{2}\left(\cosh^{2}\left(r\right)\right)-\sinh^{2}\left(r\right)\log_{2}\left(\sinh^{2}\left(r\right)\right).
\end{align}
This is so because every $n\left|n\right.-$mode bipartite pure Gaussian
state is a tensor product of $n$ two mode squeezed states with squeezing
parameters $r_{k},$ $k\in\left\{ 1,\ldots,,n\right\} $ up to a local
GLOCC unitary operation, and the entanglement of a TMS state with
squeezing $r$ is $H\left(r\right)$ as above.

It is the minimum amount of pure Gaussian entanglement required to
construct the state $\rho_{AB}.$ The GEOF is non-increasing under
Gaussian local operations and classical communication (GLOCC).
\end{defn}

\begin{cor}
The Gaussian entanglement of formation is at least as large as the
entanglement of formation
\begin{equation}
E_{G}\left(\rho_{AB}\right)\geq E_{F}\left(\rho_{AB}\right).
\end{equation}
\end{cor}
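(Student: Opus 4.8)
The plan is to read the inequality directly off the two variational definitions. Both $E_{F}$ and $E_{G}$ are infima of the \emph{same} objective---the averaged entanglement entropy of the pure constituents---but over different feasible sets of pure-state decompositions of $\rho_{AB}$. In Definition~\ref{def:eof} the decomposition $\rho_{AB}=\sum_{k}\lambda_{k}\left|\Psi_{k}\right\rangle \left\langle \Psi_{k}\right|$ ranges over \emph{all} ensembles of pure states (equivalently, all probability measures on the set of pure states), whereas in Definition~\ref{def:geof} the decomposition $\rho_{AB}=\int\lambda\left(dV_{p},d\xi\right)\Psi_{AB}^{G}\left(V_{p},\xi\right)$ is further constrained so that every constituent $\Psi_{AB}^{G}\left(V_{p},\xi\right)$ is a pure \emph{Gaussian} state. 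Since any Gaussian pure-state decomposition of $\rho_{AB}$ is in particular a legitimate pure-state decomposition of $\rho_{AB}$, the feasible set underlying $E_{G}$ is contained in the feasible set underlying $E_{F}$.

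First I would check that on this common sub-domain the objective functional is literally the same. For a bipartite pure Gaussian state $\Psi_{AB}^{G}$, the entanglement entropy $E\left(\Psi_{AB}^{G}\right)$ appearing in Definition~\ref{def:eof} coincides with the closed-form expression $\sum_{k}H\left(r_{k}\right)$ used in Definition~\ref{def:geof}: this is exactly the normal-form fact recalled there, namely that any $n\left|n\right.$-mode pure Gaussian state is a tensor product of two-mode squeezed states with parameters $r_{k}$ up to local Gaussian unitaries, together with the invariance of the entanglement entropy under local unitaries and the evaluation $H\left(r\right)$ of the entanglement entropy of a single TMS state of parameter $r$. Hence the GEOF problem and the EOF problem optimize the same function when both are restricted to Gaussian ensembles.

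An infimum of a fixed functional taken over a smaller feasible set can only be larger than or equal to the infimum over the larger set. Combining this with the two previous paragraphs gives
\begin{equation}
E_{G}\left(\rho_{AB}\right)=\inf_{\lambda\ \mathrm{Gaussian}}\int\lambda\,E\ \geq\ \inf_{\lambda\ \mathrm{arbitrary}}\int\lambda\,E=E_{F}\left(\rho_{AB}\right),
\end{equation}
which is the asserted bound.

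The remaining points are bookkeeping rather than substance, and they are the closest thing to an obstacle here. One should note that $E_{G}\left(\rho_{AB}\right)$ is well defined (and finite) precisely when $\rho_{AB}$ admits at least one pure-Gaussian decomposition---which is the case for the Gaussian states $\rho_{AB}$ to which the corollary is applied---so that the infimum defining it is over a nonempty set; and one should observe that allowing only discrete sums in $E_{F}$ versus general measures $\lambda$ in $E_{G}$ does not upset the containment of feasible sets, since any discrete ensemble is itself such a measure. With those remarks in place no further estimate is needed.
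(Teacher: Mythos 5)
Your argument is correct and is essentially the paper's own proof: $E_{G}$ is the infimum of the same averaged entanglement-entropy objective over the restricted class of pure Gaussian decompositions, so it can only be $\geq E_{F}$, the infimum over all pure-state decompositions. Your final bookkeeping remark is stated in the wrong direction (what is needed is that a Gaussian continuous-measure decomposition counts as an admissible decomposition in the $E_{F}$ optimization, not that discrete ensembles are measures), but this is a technicality the paper also glosses over and does not affect the substance of the argument.
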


\begin{proof}
This follows from Definitions \ref{def:geof} and \ref{def:eof}.
The former is an infimum over a restricted set of possible decompositions
of the state than the latter, and hence is equal or larger than the
latter.
\end{proof}
\begin{lem}
\label{lem: Gaussian-extremality-geof}(Gaussian extremality of EOF
and GEOF) Among the set of all quantum states with covariance matrix
$V,$ and arbitrary mean and other moments, the (Gaussian) entanglement
of formation is minimized by the Gaussian states whose covariance
matrix equals $V,$ i.e.,
\begin{equation}
E_{F/G}\left(\rho_{AB}\left(V\right)\right)\geq E_{F/G}\left(\rho_{AB}^{G}\left(V\right)\right).
\end{equation}
\end{lem}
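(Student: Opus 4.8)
The statement is an instance of the Gaussian extremality theorem of Wolf, Giedke, and Cirac~\cite{WGC06}, and the plan is to reduce the claim to checking that $E_F$ and $E_G$ lie within the scope of that theorem. First I would observe that a local displacement leaves both $E_F$ and $E_G$ unchanged, since each is invariant under local unitaries, so without loss of generality one may take the mean vector to be zero and compare $\rho_{AB}(V)$ with the zero-mean Gaussian state $\rho_{AB}^{G}(V)$ of the same covariance matrix.

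The engine of the proof is a quantum central-limit construction. Given any state $\rho \equiv \rho_{AB}(V)$, take $n$ i.i.d.\ copies $\rho^{\otimes n}$ under the bipartition $A_1\cdots A_n \,|\, B_1\cdots B_n$, and apply a passive Gaussian unitary $U_n = U_n^{A}\otimes U_n^{B}$ that implements one and the same symplectic-orthogonal beam-splitter network (a Hadamard-type network) on the $A$-copies and on the $B$-copies. By the quantum central limit theorem the network can be chosen so that the reduced state $\tilde\rho_n^{(i)}$ of each single-mode-pair output converges, in trace norm and with uniformly bounded mean photon number, to $\rho_{AB}^{G}(V)$ as $n\to\infty$. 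I would then chain four properties shared by $f \in \{E_F, E_G\}$: (i) invariance under the local unitary $U_n$, so that $f(\rho^{\otimes n}) = f(U_n \rho^{\otimes n} U_n^{\dagger})$; (ii) subadditivity on tensor powers, $f(\rho^{\otimes n}) \le n\,f(\rho)$, obtained by tensoring optimal (for $E_G$, Gaussian) pure-state decompositions; (iii) strong superadditivity, $f(U_n \rho^{\otimes n} U_n^{\dagger}) \ge \sum_{i=1}^{n} f(\tilde\rho_n^{(i)})$; and (iv) asymptotic continuity on energy-bounded sets, which together with the central-limit convergence gives $f(\tilde\rho_n^{(i)}) \ge f(\rho_{AB}^{G}(V)) - \varepsilon_n$ with $\varepsilon_n \to 0$. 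Combining these, $n\,f(\rho) \ge f(\rho^{\otimes n}) = f(U_n \rho^{\otimes n} U_n^{\dagger}) \ge \sum_{i=1}^n f(\tilde\rho_n^{(i)}) \ge n\bigl(f(\rho_{AB}^{G}(V)) - \varepsilon_n\bigr)$; dividing by $n$ and letting $n\to\infty$ gives $f(\rho_{AB}(V)) \ge f(\rho_{AB}^{G}(V))$, which is the asserted bound. For $E_G$ every state appearing in the chain---the copies, the output blocks, and the central-limit limit---is Gaussian and every decomposition is Gaussian, so only the Gaussian analogues of (i)--(iv) are needed; for $E_F$ one obtains precisely the bipartite case of~\cite{WGC06}.

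I expect the main obstacle to be property (iii): one must pin down exactly which superadditivity (or monotonicity-under-local-operations) property of $E_F$, respectively $E_G$, is being invoked and verify that it applies to the output blocks of the beam-splitter network, and one must simultaneously keep the mean photon numbers of those blocks uniformly bounded so that asymptotic continuity in (iv) can carry the trace-norm convergence through the functional. A secondary point to handle carefully is that for a genuinely non-Gaussian $\rho_{AB}(V)$ the quantity $E_G(\rho_{AB}(V))$ is defined through Gaussian decompositions of $\rho_{AB}(V)$ itself, so one should either restrict attention to states admitting such decompositions or note that the bound is vacuous otherwise. Once the lemma is in hand, combining its $E_F$ part with the equality $E_G(\rho_{AB}^{G}(V)) = E_F(\rho_{AB}^{G}(V))$ on the Gaussian state gives $E_G(\rho_{AB}^{G}(V)) \le E_F(\rho_{AB}(V))$, which is the statement used in the main text that the GEOF of the heralded covariance matrix lower-bounds the EOF of the heralded non-Gaussian state.
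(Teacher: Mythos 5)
You have in effect reconstructed the proof of the result the paper merely cites: the lemma is stated in the appendix without any proof and is imported from the Gaussian extremality theorem of Wolf, Giedke and Cirac \cite{WGC06}. Measured against that source, your architecture is the right one---i.i.d.\ copies, the same symplectic-orthogonal (beam-splitter ``Gaussification'') network applied locally to the $A$ halves and the $B$ halves, the quantum central limit theorem with uniformly bounded energy, and the chain of local-unitary invariance, subadditivity on tensor powers, strong superadditivity, and asymptotic continuity on energy-bounded sets. Your closing remark that $E_{G}$ of a genuinely non-Gaussian state may be undefined (no decomposition into Gaussian pure states need exist, e.g.\ for Wigner-negative states), so that the $E_{G}$ half of the inequality must be read as restricted or vacuous, is a caveat the paper itself does not make, and it is correct.

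There is, however, a genuine gap at your step (iii), and it is not just a matter of ``pinning down'' the right superadditivity statement. For $E_{F}$, strong superadditivity, $E_{F}\left(\rho_{A_{1}A_{2}|B_{1}B_{2}}\right)\geq E_{F}\left(\rho_{A_{1}B_{1}}\right)+E_{F}\left(\rho_{A_{2}B_{2}}\right)$, is not an available tool: combined with the subadditivity you use in step (ii) it would imply full additivity of the entanglement of formation, which is known to fail (Hastings' counterexample, via Shor's equivalences). Consequently the chain $n\,f(\rho)\geq\sum_{i}f(\tilde{\rho}_{n}^{(i)})$ cannot be certified for $f=E_{F}$, and the corresponding extremality statement in \cite{WGC06} is obtained only conditionally on that (then-)conjectured property; the unconditional applications of the theorem are to measures that are provably strongly superadditive and asymptotically continuous, such as distillable entanglement, squashed entanglement, or conditional-entropy-based quantities like the RCI (for which there is also a short direct proof using $\operatorname{Tr}\left(\rho\log\rho_{G}\right)=\operatorname{Tr}\left(\rho_{G}\log\rho_{G}\right)$ and monotonicity of relative entropy, with no superadditivity needed). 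The same objection applies to the $E_{G}$ half, whose strong superadditivity is likewise not established. So as a self-contained argument your proposal cannot be closed at (iii) for the EoF/GEOF case; to use the lemma rigorously one must either state it as conditional on strong superadditivity of $E_{F}$, or replace $E_{F}$ by one of the measures for which the Gaussification argument genuinely goes through---a limitation the paper inherits silently by citing \cite{WGC06} without proof.
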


\begin{prop}
For any two-mode Gaussian state $\rho_{AB}^{G},$ the Gaussian entanglement
of formation equals its entanglement of formation, i.e.,
\begin{equation}
E_{G}\left(\rho_{AB}^{G}\right)=E_{F}\left(\rho_{AB}^{G}\right).
\end{equation}
\end{prop}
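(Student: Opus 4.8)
The plan is to prove the two inequalities $E_G(\rho^G_{AB}) \geq E_F(\rho^G_{AB})$ and $E_G(\rho^G_{AB}) \leq E_F(\rho^G_{AB})$ separately. The first is essentially immediate: by definition $E_G$ is an infimum of the average entanglement entropy over decompositions of $\rho^G_{AB}$ into \emph{Gaussian} pure states, which form a subset of the pure-state decompositions appearing in the definition of $E_F$; hence $E_G(\rho^G_{AB}) \geq E_F(\rho^G_{AB})$. This is exactly the Corollary above, specialized to a Gaussian state. So all of the work is in the reverse inequality $E_G(\rho^G_{AB}) \leq E_F(\rho^G_{AB})$, i.e.\ in showing that no non-Gaussian ensemble decomposition of a two-mode Gaussian state can have strictly smaller average entanglement than the best Gaussian one.

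First I would remove all local Gaussian redundancy. Both $E_F$ and $E_G$ are invariant under local Gaussian unitaries $\hat U_A \otimes \hat U_B$: $E_F$ because it is invariant under arbitrary local unitaries, and $E_G$ because a local Gaussian unitary maps Gaussian pure states to Gaussian pure states and therefore merely relabels the admissible decompositions. Using (\ref{eq:BS Symplectic}) together with single-mode rotations and squeezers, I would bring the $4\times4$ covariance matrix $V$ of $\rho^G_{AB}$ to Simon standard form $\operatorname{diag}(a,a,b,b)$ with intermode block $\operatorname{diag}(c_+,c_-)$. On this standard form the GEOF reduces to a finite-dimensional optimization: a Gaussian pure-state decomposition is parametrized by a pure-state (minimum-uncertainty) covariance matrix $\gamma_p$ with $\gamma_p \leq V$, together with a Gaussian distribution of displacements supplying the residual covariance $V-\gamma_p$; minimizing $\sum_k H(r_k)$ over such decompositions collapses to minimizing $H(r^\star(\gamma_p))$ over the admissible $\gamma_p$, which is precisely the quantity computed in \cite{WGKWC04} for the symmetric case and in its extensions for the general case. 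Evaluating this yields an explicit closed form for $E_G(\rho^G_{AB})$.

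The substantive step is the matching lower bound $E_F(\rho^G_{AB}) \geq E_G(\rho^G_{AB})$, and this is where I expect the main obstacle. The generic Gaussian-extremality Lemma does not help directly, because the estimate one naturally writes for an arbitrary decomposition---bounding each member's reduced-state entropy by the entropy of the Gaussian state with the same single-mode covariance matrix---runs the wrong way: concavity of the von Neumann entropy produces upper, not lower, bounds on the average. Instead one must lower-bound the average entanglement of an \emph{arbitrary} ensemble $\rho^G_{AB}=\sum_k \lambda_k |\psi_k\rangle\langle\psi_k|$ directly. The route is to exhibit an entanglement monotone (equivalently a convex lower bound on $E_F$) that is computable on all two-mode states from data controlled by the first- and second-moment constraint $\sum_k \lambda_k(\gamma^{(k)}+\text{cov of means})=V$, and that is provably equal, on two-mode Gaussian states in standard form, to the explicit $E_G$ found above. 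Verifying this saturation---that the best Gaussian decomposition already attains the monotone---is the technical core, and is exactly the dedicated two-mode Gaussian analysis of \cite{WGKWC04} and its generalization to non-symmetric two-mode Gaussian states, which I would invoke. Combining the explicit value of $E_G$, this matching lower bound on $E_F$, and the trivial inequality $E_G\geq E_F$ gives $E_G(\rho^G_{AB})=E_F(\rho^G_{AB})$.
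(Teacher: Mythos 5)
Your proposal is correct and takes essentially the same route as the paper, whose entire proof of this proposition is the citation ``See \cite{MM08}'': after the easy observation that restricting to Gaussian pure-state decompositions gives $E_{G}\geq E_{F}$, you correctly identify that the substantive content is the saturation $E_{F}\geq E_{G}$, which you likewise discharge by invoking the dedicated two-mode analysis of \cite{WGKWC04} and its extension to non-symmetric two-mode Gaussian states---that extension being precisely \cite{MM08}. Your added structural remarks (standard-form reduction, the explicit GEOF optimization over $V_{p}\leq V$, and the point that naive Gaussian extremality bounds run the wrong way) are accurate but do not go beyond what the cited literature already supplies.
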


\begin{proof}
See \cite{MM08}.
\end{proof}
\begin{figure}
\medskip{}

\begin{tabular}{c}
 \includegraphics[scale=0.3]{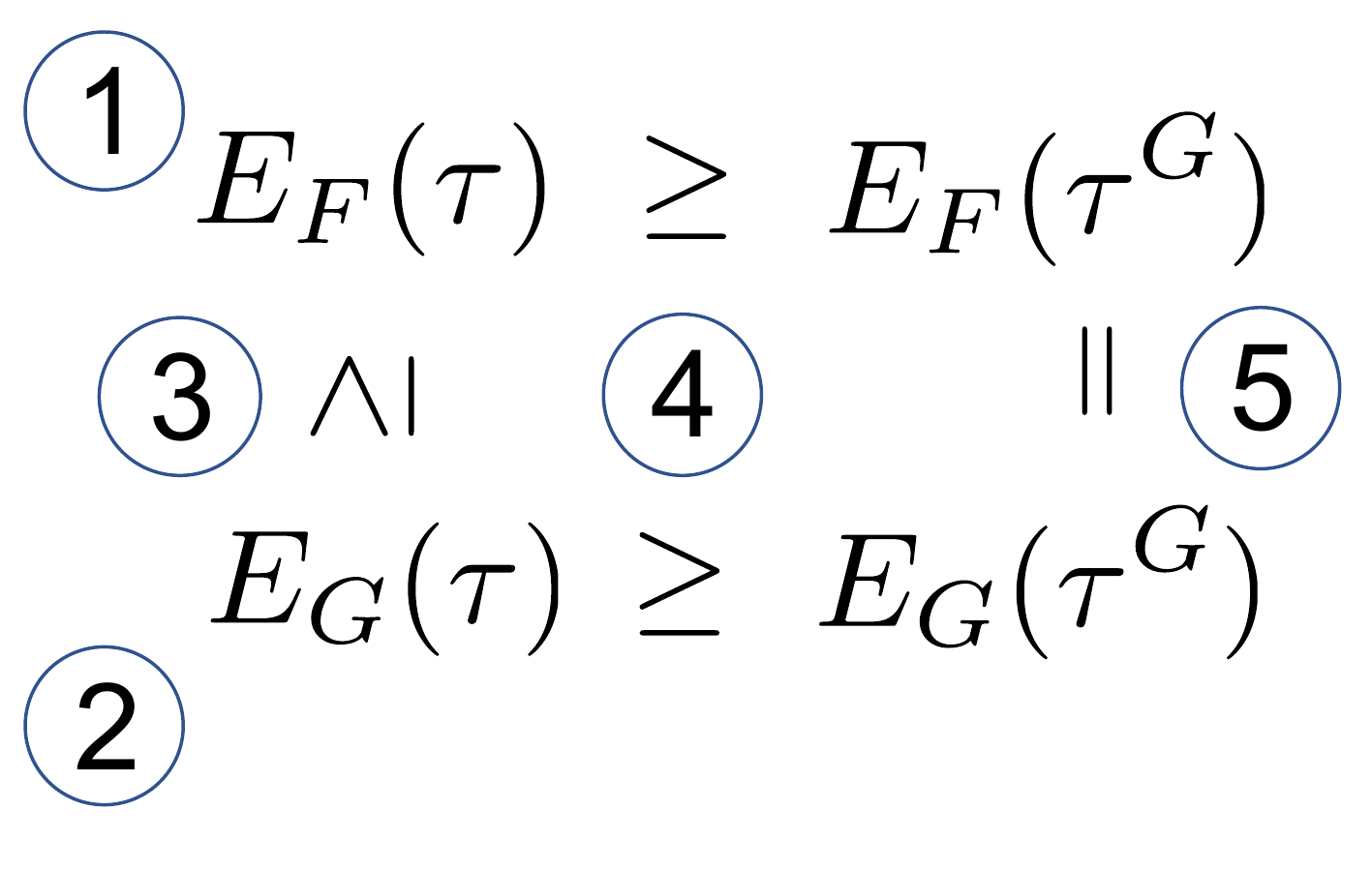}\tabularnewline
\end{tabular}

\centering\caption{Relation between entanglement of formation and Gaussian entanglement
of formation. $\tau$ represents a bipartite non-Gaussian state, while
$\tau^{G}$ denotes a Gaussian state whose covariance matrix is the
same as that of $\tau.$ The numberings correspond to the numberings
of the text in Appendix~\ref{sec:Entanglement-of-Formation}. }
\label{Ralph EC box with mod scissors-1}
\end{figure}
\begin{lem}
\label{lem:geof of gaussian state}The Gaussian entanglement of formation
of a bipartite Gaussian state of mean $d$ and covariance matrix $V$
equals
\begin{equation}
E_{G}\left(\rho_{AB}^{G}\left(V,d\right)\right):=\inf_{V_{p}}\left\{ E\left(\rho_{AB}^{G}\left(V_{p},0\right)\right)\left|V_{p}\leq V\right.\right\} ,
\end{equation}
where $\rho_{AB}^{G}\left(V_{p},0\right)$ are pure entangled states,
and $E$ is the entanglement entropy.
\end{lem}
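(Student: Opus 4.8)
The plan is to prove the two inequalities $E_{G}(\rho_{AB}^{G}(V,d))\le f(V)$ and $E_{G}(\rho_{AB}^{G}(V,d))\ge f(V)$, where $f(V)$ abbreviates the right-hand side, i.e.\ the infimum of $E(\rho_{AB}^{G}(V_{p},0))$ over pure Gaussian covariance matrices $V_{p}\le V$. Two elementary facts will be used throughout: (i) the entanglement entropy $E$ of a pure bipartite Gaussian state is invariant under local displacements, since by Williamson's theorem such a state is a product of single-mode two-mode-squeezed blocks up to a local symplectic transformation, so that $E=\sum_{k}H(r_{k})$ depends only on the covariance matrix; and (ii) for a convex (Gaussian) mixture of states, the covariance matrix of the average equals the average of the components' covariance matrices plus twice the covariance of their mean vectors, and hence dominates, in the positive-semidefinite order, the average of the components' covariance matrices.

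\emph{The bound $E_{G}\le f$.} Fix a pure Gaussian covariance matrix $V_{p}\le V$ and set $\Sigma:=V-V_{p}\ge0$. A short characteristic-function computation shows $\rho_{AB}^{G}(V,d)=\int d\mathbf{v}\,p(\mathbf{v})\,\hat{D}_{\mathbf{v}}\rho_{AB}^{G}(V_{p})\hat{D}_{\mathbf{v}}^{\dagger}$, where $p$ is the Gaussian probability density on $\mathbb{R}^{2n}$ with mean $d$ and covariance proportional to $\Sigma$: displacing $\rho_{AB}^{G}(V_{p})$ by $\mathbf{v}$ multiplies its characteristic function by a phase linear in $\mathbf{v}$, and averaging that phase against $p$ produces precisely the Gaussian factor that turns $\exp(-\tfrac{1}{4}\xi^{\operatorname{T}}V_{p}\xi)$ into $\exp(-\tfrac{1}{4}\xi^{\operatorname{T}}V\xi-id^{\operatorname{T}}\xi)$. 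This is a legitimate Gaussian decomposition of $\rho_{AB}^{G}(V,d)$ into pure states all of covariance matrix $V_{p}$, so by fact (i) its average entanglement equals $E(\rho_{AB}^{G}(V_{p},0))$; Definition~\ref{def:geof} and an infimum over admissible $V_{p}$ give the bound. (In particular this shows $f(V)$, and hence the asserted value, does not depend on $d$.)

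\emph{The bound $E_{G}\ge f$.} Let $\rho_{AB}^{G}(V,d)=\int\lambda(dV_{p},d\mathbf{v})\,\Psi_{AB}^{G}(V_{p},\mathbf{v})$ be an arbitrary Gaussian decomposition with covariance-marginal $\mu$. By fact (i) its cost is $\int\mu(dV_{p})\,E(\rho_{AB}^{G}(V_{p},0))$, and by fact (ii) the average covariance $\int\mu(dV_{p})\,V_{p}$ lies below $V$. What remains is to show the cost is still at least $f(V)$, i.e.\ that allowing a nontrivial spread of pure-state covariance matrices in the decomposition cannot help; I expect this to be the main obstacle. It is exactly the ingredient borrowed from Wolf, Giedke, Kr\"uger, Werner and Cirac~\cite{WGKWC04}: the single-covariance cost functional $f$ is convex and antitone in the positive-semidefinite order (on pure covariance matrices it reduces to $E$ itself, a pure Gaussian covariance matrix being a minimal element of that order among admissible covariance matrices). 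Granting this, condition the decomposition on $V_{p}$, writing $\rho_{AB}^{G}(V,d)=\int\mu(dV_{p})\,\sigma_{V_{p}}$ with $\sigma_{V_{p}}$ the (in general non-Gaussian) conditional mixture of displaced pure states, of covariance matrix $W_{V_{p}}\ge V_{p}$. Then $E(\rho_{AB}^{G}(V_{p},0))\ge f(W_{V_{p}})$ because $V_{p}$ is feasible for $f(W_{V_{p}})$; Jensen's inequality gives $\int\mu(dV_{p})\,f(W_{V_{p}})\ge f\!\left(\int\mu(dV_{p})\,W_{V_{p}}\right)$; and the law of total covariance together with fact (ii) bounds $\int\mu(dV_{p})\,W_{V_{p}}$ above by $V$, so antitonicity of $f$ closes the chain at $f(V)$. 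Combining the two bounds yields the lemma.
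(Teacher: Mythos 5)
Your reconstruction is welcome, since the paper's own ``proof'' of this lemma is nothing more than a pointer to \cite{MM08,WGKWC04}. Your easy direction is correct and essentially complete: for any pure $V_{p}\le V$, the state $\rho_{AB}^{G}(V,d)$ is a Gaussian-weighted mixture of displaced copies of $\rho_{AB}^{G}(V_{p},0)$ (with classical displacement covariance $(V-V_{p})/2$ in the paper's convention), displacements act as local unitaries so every member of the ensemble has entanglement $E(\rho_{AB}^{G}(V_{p},0))$, and Definition~\ref{def:geof} then gives $E_{G}\le f(V)$ together with independence of $d$.

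The gap is in the direction $E_{G}\ge f$. Your chain hinges on Jensen's inequality, which requires convexity of $f(V)=\inf\{E(\rho^{G}(V_{p},0)):V_{p}\le V\}$ as a function of the covariance matrix. You do not prove this; you ``grant'' it and attribute it to \cite{WGKWC04}, but convexity in the covariance-matrix argument is not an innocuous property (it does not follow from convexity of the convex roof in the state, because the convex combination of two Gaussian states is not the Gaussian state with the averaged covariance matrix), and it is not, to my reading, what that reference supplies --- what \cite{WGKWC04} establishes is essentially the lemma itself. As written, the hard direction is therefore either circular or rests on an unproven nontrivial claim. The standard way to close it bypasses convexity altogether: show that \emph{every} pure Gaussian state occurring (in the support of the measure) in any decomposition of $\rho_{AB}^{G}(V,d)$ already has covariance matrix $V_{p}\le V$. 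Indeed, if $p\,|\Psi\rangle\langle\Psi|\le\rho$ with $p>0$, apply both sides to $e^{t\hat{R}_{u}}$ with $\hat{R}_{u}=u^{\operatorname{T}}\hat{\mathbf{r}}$ an arbitrary quadrature; both traces are Gaussian moment generating functions, and letting $t\to\infty$ forces $u^{\operatorname{T}}V_{p}u\le u^{\operatorname{T}}Vu$ for every $u$ (a routine limiting argument extends this to points in the support of a continuous decomposition). Then each member of any Gaussian decomposition satisfies $E(\rho^{G}(V_{p},0))\ge f(V)$ outright, and averaging gives the bound --- with no need for the $W_{V_{p}}$ detour, the law of total covariance, or Jensen. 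With that replacement your argument becomes a complete and self-contained proof of the lemma, which is more than the paper itself provides.
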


\begin{proof}
See \cite{MM08,WGKWC04}. We use the results in \cite{MM08} to evaluate
the (Gaussian) entanglement of formation of the heralded covariance
matrix in the NLA-assisted communication schemes.
\end{proof}
\begin{rem}
\label{rem:geof of gaussian fn of cm}Evidently, from Lemma \ref{lem:geof of gaussian state},
the Gaussian entanglement of formation is independent of displacements
and equals the entanglement entropy of a TMSV state, wherein the infimum
picks the TMSV state with the smallest possible squeezing.
\end{rem}

\begin{cor}
\label{cor:ergodic geof}When the covariance matrix of a non-Gaussian
quantum state $\rho_{AB}$ is $V\left(\gamma\right)$, where $\gamma$
is some complex parameter distributed according to $P\left(\gamma\right)$,
we have that
\begin{equation}
\int d\gamma P\left(\gamma\right)E_{F/G}\left(\rho_{AB}\left(V\left(\gamma\right)\right)\right)\geq\int d\gamma P\left(\gamma\right)E_{F/G}\left(\rho_{AB}^{G}\left(V\left(\gamma\right)\right)\right).
\end{equation}
\end{cor}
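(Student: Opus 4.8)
The plan is to reduce the statement to a pointwise inequality in $\gamma$ and then integrate. First I would fix an arbitrary value of the parameter $\gamma$ and consider the state $\rho_{AB}\left(V\left(\gamma\right)\right)$, which is a (generally non-Gaussian) bipartite state whose covariance matrix is $V\left(\gamma\right)$. By the Gaussian extremality of the entanglement of formation and of the Gaussian entanglement of formation (Lemma~\ref{lem: Gaussian-extremality-geof}), among all states with covariance matrix $V\left(\gamma\right)$ the quantity $E_{F/G}$ is minimized by the Gaussian state with that same covariance matrix, so that
\begin{equation}
E_{F/G}\left(\rho_{AB}\left(V\left(\gamma\right)\right)\right)\geq E_{F/G}\left(\rho_{AB}^{G}\left(V\left(\gamma\right)\right)\right)
\label{eq:pointwise geof bound}
\end{equation}
holds for every $\gamma$ in the support of $P$.

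Next I would integrate \eqref{eq:pointwise geof bound} against the probability density $P\left(\gamma\right)$. Since $P\left(\gamma\right)\geq0$ and the Lebesgue integral is monotone, multiplying the inequality by $P\left(\gamma\right)$ and integrating over $\gamma$ preserves the inequality, yielding
\begin{equation}
\int d\gamma\,P\left(\gamma\right)E_{F/G}\left(\rho_{AB}\left(V\left(\gamma\right)\right)\right)\geq\int d\gamma\,P\left(\gamma\right)E_{F/G}\left(\rho_{AB}^{G}\left(V\left(\gamma\right)\right)\right),
\end{equation}
which is exactly the claimed statement.

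The only point that requires a word of care — and it is not really an obstacle — is that both integrals be well defined, i.e.\ that $\gamma\mapsto E_{F/G}\left(\rho_{AB}\left(V\left(\gamma\right)\right)\right)$ and $\gamma\mapsto E_{F/G}\left(\rho_{AB}^{G}\left(V\left(\gamma\right)\right)\right)$ are measurable (and integrable against $P$). In the applications of interest $V\left(\gamma\right)$ depends smoothly on $\gamma$ through the Gaussian conditional-dynamics formulas of Appendix~\ref{sec:Gaussian-Conditional-Dynamics}, and the Gaussian entanglement of formation is a continuous function of the covariance matrix on the relevant domain (it is obtained, via Lemma~\ref{lem:geof of gaussian state}, from a continuous symplectic optimization), so measurability is automatic; integrability follows because $P$ is concentrated on a bounded range of outcomes in the post-selected scenario where this corollary is invoked. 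With that observed, the corollary is an immediate consequence of integrating the pointwise extremality bound, so no further work is needed.
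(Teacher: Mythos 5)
Your argument is correct and is essentially the paper's own proof: the pointwise Gaussian-extremality inequality of Lemma~\ref{lem: Gaussian-extremality-geof} is integrated against the nonnegative weight $P(\gamma)$ (the paper additionally notes $E_{F/G}\geq 0$, which plays the role of your well-definedness remark). No substantive difference in approach.
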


\begin{proof}
This follows from Lemma \ref{lem: Gaussian-extremality-geof} and
the fact that $P\left(\gamma\right)\geq0$ and $E_{F/G}\geq0$ for
any state.
\end{proof}
\begin{rem}
We use the lower bound in Corollary \ref{cor:ergodic geof} (with
$E_{G}$) as our figure of merit for the scheme depicted in Fig.~\ref{Ralph EC box with mod scissors}.
\end{rem}

\begin{rem}
\label{rem:displ and means}A deterministic displacement operation
affects only the mean of a generic quantum state, and doesn't change
its covariance matrix or higher moments. 
\end{rem}

\begin{cor}
Given a generic conditional state $\rho_{AB}\left(\gamma\right)$
of mean $d\left(\gamma\right)$ and covariance matrix $V\left(\gamma\right),$
conditioned on a parameter $\gamma$ (e.g., $\gamma$ could be the
outcome of a dual homodyne detection), the action of conditional displacements
$D\left(g\gamma\right)$ on the state doesn't change its GEOF or the
average GEOF of Corollary (\ref{cor:ergodic geof}).
\end{cor}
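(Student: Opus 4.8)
The plan is to reduce the statement to the two facts already recorded in Remarks~\ref{rem:geof of gaussian fn of cm} and~\ref{rem:displ and means}: that the (Gaussian) entanglement of formation of a bipartite state is a function of its covariance matrix alone, and that a displacement operation leaves the covariance matrix untouched. The only subtlety is that here the displacement is \emph{conditional}, i.e., it depends on the random parameter $\gamma$, so one must argue pointwise in $\gamma$ and then integrate.

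First I would fix $\gamma$. For this fixed value the correction map is the deterministic displacement $D(g\gamma)$, which by Remark~\ref{rem:displ and means} sends $\rho_{AB}(\gamma)$ with mean $d(\gamma)$ and covariance matrix $V(\gamma)$ to a state with the same covariance matrix $V(\gamma)$ and only a shifted mean. Since, by Lemma~\ref{lem:geof of gaussian state} together with Remark~\ref{rem:geof of gaussian fn of cm}, the quantity $E_G(\rho_{AB}^G(V(\gamma)))$ is evaluated purely from $V(\gamma)$ — the infimum defining it ranges over pure covariance matrices $V_p \leq V(\gamma)$ and is independent of all first moments — the value $E_G(\rho_{AB}(\gamma))$ is unchanged by the displacement. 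If one prefers to work with the genuinely non-Gaussian heralded state rather than with its Gaussian part, the same conclusion holds because, in this paper, $E_G$ of a non-Gaussian state is taken to be $E_G$ of its covariance matrix (cf.\ Lemma~\ref{lem: Gaussian-extremality-geof} and Corollary~\ref{cor:ergodic geof}), which the displacement does not move.

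Next I would integrate over $\gamma$. The conditional displacement is applied \emph{after} the measurement that generates $\gamma$, so it does not alter the outcome distribution $P(\gamma)$; hence the average GEOF of Corollary~\ref{cor:ergodic geof}, namely $\int d\gamma\, P(\gamma)\, E_G(\rho_{AB}^G(V(\gamma)))$, is the integral of an integrand that is unchanged pointwise against an unchanged measure, and therefore is itself unchanged. This establishes both claims. The argument is unaffected if one post-selects $\gamma$ on a narrow window around zero, as done in the results section, since that merely restricts the domain of integration without touching the integrand.

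The main obstacle — such as it is — is purely bookkeeping: checking that $D(g\gamma)$ really acts only on the first moments of the joint state (equivalently, of the output modes $A,B$ after heralding). This is immediate from the decomposition $\hat{U}_{\mathbf{s},S}=\hat{D}_{-\mathbf{s}}\hat{U}_{S}$ of Appendix~\ref{sec:Gaussian-States,-Unitaries} specialized to $S=I$, under which the transformation law $\mathbf{V}\to S\mathbf{V}S^{\operatorname{T}}$ reduces to $\mathbf{V}\to\mathbf{V}$ while $\mathbf{s}\to\mathbf{s}+(\text{shift})$; the displacements $g_A(-\gamma_x,-\gamma_y)$ and $g_B(-\gamma_x,+\gamma_y)$ of Appendix~\ref{sec:Gaussian-Conditional-Dynamics} are of exactly this form.
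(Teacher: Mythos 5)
Your proposal is correct and follows essentially the same route the paper takes: it combines Remark~\ref{rem:geof of gaussian fn of cm} (GEOF depends only on the covariance matrix, not on first moments) with Remark~\ref{rem:displ and means} (a displacement shifts only the mean), applied pointwise at each fixed $\gamma$ and then averaged against the unchanged outcome distribution $P(\gamma)$. The added bookkeeping about $D(g\gamma)$ being a Gaussian unitary with $S=I$ is a harmless elaboration of the same argument.
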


\begin{rem}
Thus, teleportation displacement correction does not affect the ergodic
average GEOF lower bound we calculate for Ralph's scheme.
\end{rem}

\end{widetext}
\end{document}